\newcommand{\scan}[1]{\text{Scan}\left(#1\right)}
\newcommand{\sort}[1]{\text{Sort}\left(#1\right)}
\newcommand{\OO}[1]{O\left(#1\right)}
\newcommand{\Om}[1]{\Omega\left(#1\right)}
\newcommand{\OM}[1]{\Omega\left(#1\right)}
\newcommand{\rep}{D.rep}
\newcommand{\deckey}{\textsc{DecreaseKey}}
\newcommand{\ins}{\textsc{Insert}}
\newcommand{\ext}{\textsc{Extract}}
\newcommand{\sea}{\textsc{Search}}
\newcommand{\bi}{\textsc{Batched-Insert}}
\newcommand{\be}{\textsc{Batched-ExtractMin}}
\newcommand{\re}{\textsc{Resolve}}
\newcommand{\flu}{\textsc{Flush-Up}}
\newcommand{\fld}{\textsc{Flush-Down}}
\newcommand{\spl}{\textsc{Split}}
\newcommand{\init}{\textsc{Initialize}}
\newcommand{\update}{\textsc{Update}}
\newcommand{\delete}{\textsc{Delete}}
\newcommand{\extmin}{\textsc{ExtractMin}}
\newcommand{\dx}{\left(D.x\right)}
\newcommand{\MM}{\frac{M}{B}}
\newcommand{\NN}{\frac{N}{B}}
\newtheorem{theorem}{Theorem}
\newtheorem{lemma}{Lemma}
\date{}
\title{Cache-Oblivious Priority Queues with Decrease-Key and Applications to Graph Algorithms} 
\author{John Iacono\\Department of Computer Science, Universit\'e Libre de Bruxelles, Belgium \and 
Riko Jacob\\Computer Science Department, IT University of Copenhagen, Denmark \and
Konstantinos Tsakalidis\\Department of Computer Science, University of Liverpool, United Kingdom}
\begin{document}

\maketitle

\begin{abstract}
	We present priority queues in the cache-oblivious external memory model with block size $B$ and main memory size $M$ that support on $N$ elements, operation \update\ (a combination of operations \ins~and \deckey) in $\OO{\frac{1}{B}\log_{\frac{\lambda}{B}} \frac{N}{B}}$ amortized I/Os and operations \extmin\ and \delete\ in $\OO{\lceil \frac{\lambda^{\varepsilon}}{B}  \log_{\frac{\lambda}{B}} \frac{N}{B} \rceil  \log_{\frac{\lambda}{B}} \frac{N}{B}}$ amortized I/Os, using $\OO{\frac{N}{B}\log_{\frac{\lambda}{B}} \frac{N}{B}}$ blocks, for a user-defined parameter $\lambda \in \left[2,   N \right]$ and any real $\varepsilon \in \left(0,1\right)$. In the cache-aware I/O model, \update\ takes optimal $\OO{\frac{1}{B}\log_{\MM} \frac{N}{B}}$ amortized I/Os by setting $\lambda = \OO{M}$. Previous I/O-efficient cache-oblivious and cache-aware priority queues either support these operations in $\OO{\frac{1}{B}\log_2 \frac{N}{B}}$ amortized I/Os [Chowdhury and Ramachandran, TALG 2018], [Brodal et al., SWAT 2004], [Kumar and Schwabe, SPDP 1996], or support only operations \ins, \delete\ and \extmin~in optimal I/Os, but without supporting \deckey\ [Arge et al., SICOMP 2007], [Fadel et al., TCS 1999].
	
	We also present buffered repository trees that support on a multi-set of $N$ elements, operation \ins\ in $\OO{\frac{1}{B}\log_{\frac{\lambda}{B}} \frac{N}{B}}$ I/Os and operation \ext\ on $K$ extracted elements in $\OO{\frac{\lambda^{\varepsilon}}{B} \log_{\frac{\lambda}{B}} \frac{N}{B} + \frac{K}{B}}$ amortized I/Os, using $\OO{\frac{N}{B}}$ blocks. Previous cache-oblivious and cache-aware results achieve $\OO{\frac{1}{B}\log_2 \frac{N}{B}}$ I/Os and $\OO{\log_2 \frac{N}{B} + \frac{K}{B}}$ I/Os, respectively [Arge et al., SICOMP '07], [Buchsbaum et al., SODA '00].
	
	In the cache-oblivious external memory model, for $\lambda = \OO{E/V}$, our results imply improved $\OO{\frac{E}{B}\log_{\frac{E}{V B}} \frac{E}{B}}$ I/Os for single-source shortest paths, depth-first search and breadth-first search algorithms on massive directed dense graphs $\left(V,E\right)$. Our algorithms are I/O-optimal for $E/V = \Omega \left(M\right)$. In the cache-aware setting, our algorithms are I/O-optimal for $\lambda = \OO{M}$.
\end{abstract}


\newpage

\section{Introduction}

Priority queues are fundamental data structures with numerous applications across computer science, most prominently in the design of efficient graph algorithms. They support the following operations on $N$ stored \emph{elements} of the type (\emph{key}, \emph{priority}), where ``key'' serves as an identifier and ``priority'' is a value from a total order:
\begin{itemize}
	\item \ins(element $e$): Insert element $e$ to the priority queue.
	\item \delete(key $k$): Remove all elements with key $k$ from the priority queue.
	\item element $e=$ \extmin(): Remove and return the element $e$ in the priority queue with the smallest priority.
	\item \deckey(element $(k,p)$): Given that an element with key $k$ and priority $p'$ is stored in the priority queue, if priority $p<p'$, replace the element's priority $p'$ with $p$. 
\end{itemize}

\noindent Operation \update(element $(k,p)$)~is the combination of operations \ins~and \deckey, that calls \ins($(k,p)$), if the priority queue does not contain any element with key $k$; or \deckey($(k,p)$), otherwise.  

We study the problem of designing cache-oblivious and cache-aware priority queues that support all these operations in external memory.  In the \emph{external memory model} (also known as the \emph{I/O model}) \cite{AV88} the amount of input data is assumed to be much larger than the main memory size $M$. Thus, the data is stored in an external memory device (i.e. disk) that is divided into consecutive blocks of size $B$ elements. Time complexity is measured in terms of \emph{I/O operations} (or \emph{I/Os}), namely block transfers from external to main memory and vice versa, while computation in main memory is considered to be free. Space complexity is measured in the number of blocks occupied by the input data in external memory. Algorithms and data structures in the I/O model are considered \emph{cache-aware}, since they are parameterized in terms of $M$ and $B$. In contrast, \emph{cache-oblivious} algorithms and data structures \cite{FLPR99} are oblivious to both these values, which allows them to be efficient along all levels of a memory hierarchy. I/O-optimally scanning and sorting $x$ consecutive elements in an array are commonly denoted to take  $\scan{x} = \OO{\frac{x}{B}}$ I/Os and $\sort{x} = \OO{\frac{x}{B}\log_\MM\frac{x}{B}}$ I/Os, respectively \cite{AV88,FLPR99}.



Priority queues are a basic component in several fundamental graph algorithms, including: 

\begin{itemize}
	\item The \emph{single-source shortest paths (SSSP)} algorithm on directed graphs with positively weighted edges, which computes the minimum edge-weight paths from a given source node to all other nodes in the graph.
	
	\item The \emph{depth-first search (DFS)} and \emph{breadth-first search (BFS)} algorithms on directed unweighted graphs, which number all nodes of the graph according to a depth-first or a breadth-first exploration traversal starting from a given source node, respectively.

\end{itemize}

\noindent Another necessary component for these algorithms are I/O-efficient \emph{buffered repository trees (BRTs)} \cite{BGVW00,ABDHM07,CR18}. They are used by external memory graph algorithms in order to confirm that a given node has already been visited by the algorithm. This avoids expensive random-access I/Os incurred by simulating internal memory methods in external memory. In particular, BRTs support the following operations on a stored multi-set of $N$ (key, value) elements, where ``key'' serves as an identifier and ``value'' is a value from a total order:

\begin{itemize}
	\item \ins(element $e$): Insert element $e$ to the BRT.
	\item element $e_{i} = $ \ext(key $k$): Remove and return all $K$ elements $e_i$ (for $i\in [1,K]$) in the BRT with key $k$.
\end{itemize}

\begin{table}
	\begin{center}
		\begin{tabular}{l l l l l}
			& \ins & \delete & \extmin & \deckey \\
			\cite{ABDHM07} & $\frac{1}{B}\log_{\frac{M}{B}}\NN$ & $\frac{1}{B}\log_{\frac{M}{B}}\NN$ & $\frac{1}{B}\log_{\frac{M}{B}}\NN$ & $-$ \\
			\cite{BFMZ04,CR18} & $\frac{1}{B}\log_{2}\NN$ & $\frac{1}{B}\log_{2}\NN$ & $\frac{1}{B}\log_{2}\NN$ & $\frac{1}{B}\log_{2}\NN$ \\
			New & $\frac{1}{B}\log_{\frac{\lambda}{B}} \frac{N}{B}$ &  $\lceil \frac{\lambda^{\varepsilon}}{B}  \log_{\frac{\lambda}{B}} \frac{N}{B} \rceil  \log_{\frac{\lambda}{B}} \frac{N}{B}$ &  $\lceil \frac{\lambda^{\varepsilon}}{B}  \log_{\frac{\lambda}{B}} \frac{N}{B} \rceil  \log_{\frac{\lambda}{B}} \frac{N}{B}$ & $\frac{1}{B}\log_{\frac{\lambda}{B}} \frac{N}{B}$\\\hline 			
			\cite{FJKT99,WY14} & $\frac{1}{B}\log_{\frac{M}{B}}\NN$ & $\frac{1}{B}\log_{\frac{M}{B}}\NN$ & $\frac{1}{B}\log_{\frac{M}{B}}\NN$ & $-$ \\
			\cite{KS96} & $\frac{1}{B}\log_{2}\NN$ & $\frac{1}{B}\log_{2}\NN$ & $\frac{1}{B}\log_{2}\NN$ & $\frac{1}{B}\log_{2}\NN$ \\
			\cite{JL19}$^*$ & $\frac{1}{B}\log_{\log N} \NN$ & $\frac{1}{B}\log_{\log N} \NN$ & $\frac{1}{B}\log_{\log N} \NN$ & $\frac{1}{B}\log_{\log N} \NN$ \\
			New & $\frac{1}{B}\log_{\MM}\NN$ & $\lceil\frac{M^{\varepsilon}}{B} \log_{\MM}\NN \rceil \log_{\MM}\NN$ & $\lceil\frac{M^{\varepsilon}}{B} \log_{\MM}\NN \rceil \log_{\MM}\NN$ & $\frac{1}{B}\log_{\MM}\NN$ 		\end{tabular}
	\end{center}
	\caption{Asymptotic amortized I/O-bounds of cache-oblivious and cache-aware priority queue operations (respectively, above and below the horizontal line) on $N$ elements, parameter $\lambda \in \left[ 2, N \right]$ and real $\varepsilon \in \left(0,1\right)$. $^*$Expected I/Os.}\label{tab:pq}
\end{table}

\subsection{Previous work} 
Designing efficient external memory priority queues able to support operation \deckey\ (or at least operation \update) has been a long-standing open problem \cite{KS96,FJKT99,WY14,ELY17,CR18,JL19}. I/O-efficient adaptations of the standard heap data structure (cache-aware \cite{FJKT99} and cache-oblivious \cite{ABDHM07}) or other cache-aware sorting-based approaches \cite{WY14}, despite achieving optimal base-$(M/B)$ logarithmic amortized I/O-complexity, fail to support operation \deckey. (Nevertheless, we use these priority queues as subroutines in our structure.) 
On the other hand, cache-aware adaptations of the tournament tree \cite{KS96} and cache-oblivious adaptations of the heap \cite{BFMZ04,CR18} data structures support all operations, albeit in not so efficient base-$2$ logarithmic amortized I/Os. Indeed, in the recent work of Eenberg, Larsen and Yu \cite{ELY17} it is shown that for a sequence of $N$ operations, any external-memory priority queue supporting \deckey\ must spend $\max$\{\ins, \delete, \extmin, \deckey\} $ = \Om{\frac{1}{B}\log_{\log N}B}$ amortized I/Os. Jiang and Larsen present matching randomized priority queues \cite{JL19}. 

The cache-aware BRTs introduced by Buchsbaum et al. \cite[Lemma 2.1]{BGVW00} and their cache-oblivious counterparts \cite{ABDHM07} support \ins\ in $\OO{\frac{1}{B}\log_{2} \NN}$ amortized I/Os and \ext\ on $K$ extracted elements in $\OO{\log_{2} \NN + \frac{K}{B}}$ amortized I/Os on a multi-set of $N$ stored elements. 

\subsection{Our contributions} 

We present cache-oblivious I/O-efficient priority queues that support on $N$ stored elements, operation \update\ in optimal  $\OO{\frac{1}{B}\log_{\frac{\lambda}{B}} \frac{N}{B}}$ amortized I/Os and operations \extmin\ and \delete\ in $\OO{\lceil \frac{\lambda^{\varepsilon}}{B}  \log_{\frac{\lambda}{B}} \frac{N}{B} \rceil  \log_{\frac{\lambda}{B}} \frac{N}{B}}$ I/Os, using $\OO{\frac{N}{B}\log_{\frac{\lambda}{B}} \frac{N}{B}}$ blocks, for a user-defined parameter $\lambda \in \left[ 2, N \right]$ and any real $\varepsilon \in \left(0,1\right)$. Our priority queues are the first to support operation \update\ (and thus \deckey and \ins) in a cache-oblivious setting in $o\left(\frac{1}{B} \log_{2} \NN \right)$ I/Os. This is the first I/O-optimal cache-aware \update\ bound, setting $\lambda = \OO{M}$. Our bounds improve on previous cache-aware \cite{KS96} and cache-oblivious \cite{BFMZ04,CR18} priority queues supporting \deckey, albeit at the expense of suboptimal I/O-efficiency for \extmin\ and \delete\ (respecting the lower bound of \cite{ELY17} for $\lambda =\OM{B\log_2 N}$). 
See Table \ref{tab:pq} for a comparison with previous external memory priority queues.

We also present cache-oblivious I/O-efficient BRTs that support on a multi-set of $N$ elements, operation \ins\ in $\OO{\frac{1}{B}\log_{\frac{\lambda}{B}} \frac{N}{B}}$ amortized I/Os and operation \ext\ on $K$ extracted elements in $\OO{\frac{\lambda^{\varepsilon}}{B}  \log_{\frac{\lambda}{B}} \frac{N}{B} + \frac{K}{B}}$ amortized I/Os. Our bounds also hold in a cache-aware setting for $\lambda = \OO{M}$. Previous cache-oblivious and cache-aware I/O-bounds are $\OO{\frac{1}{B}\log_2 \frac{N}{B}}$ and $\OO{\log_2 \frac{N}{B} + \frac{K}{B}}$, respectively \cite{ABDHM07,BGVW00}.

Combining our BRTs with our priority queues, for cache-oblivous external memory SSSP, DFS and BFS algorithms, we achieve $\OO{\frac{V^{\frac{1}{1+\alpha}} E^{\frac{\alpha}{1+\alpha}}}{B}\log^2_{\frac{E}{VB}} \frac{E}{B} + V\log_{\frac{E}{VB}} \frac{E}{B} +  \frac{E}{B} \log_{\frac{E}{VB}} \frac{E}{B}}$ I/Os on graphs with $V$ nodes and $E$ directed edges, setting $\lambda = \OO{E/V}$. 
For cache-aware extrenal memory SSSP, DFS and BFS algorithms, we achieve $\OO{V \frac{M^{\frac{\alpha}{1+\alpha}}}{B}\log^2_{\frac{M}{B}} \frac{E}{B} + V\log_{\frac{M}{B}} \frac{E}{B} +  \frac{E}{B} \log_{\frac{M}{B}} \frac{E}{B}}$ I/Os, setting $\lambda = \OO{M}$. This compares to previous cache-oblivious and cache-aware graph algorithms that take $\OO{\left(V+\frac{E}{B}\right)\log_{2} E}$ I/Os for directed SSSP \cite{KS96,V01,CR18} and that take $\OO{\left(V+\frac{E}{B}\right)\log_{2} \frac{V}{B} + \frac{E}{B}\log_{\frac{M}{B}}\frac{E}{B}}$ I/Os for directed DFS and BFS \cite{BGVW00,ABDHM07}. 
Our cache-oblivious and cache-aware bounds are I/O-optimal, for dense graphs with $E/V = \Om{M}$ and with $E = \Om{V^{1+\varepsilon}}$ and $V \!\! = \Om{M}$, respectively.

\subsection{Our approach} 

The main component of our priority queues is the $x$-\emph{treap}, a recursive structure inspired by similar cache-oblivious $x$-box \cite{BDFILM10} and cache-aware hashing data structures \cite{IP12} that solve the dynamic dictionary problem in external memory (respectively, under predecessor and membership queries on a dynamic set of keys). To solve the priority queue problem, we adapt this recursive scheme to also handle priorities, inspired by the cache-oblivious priority queues of Brodal et al.~\cite{BFMZ04} and of Chowdhury and Ramachandran \cite{CR04,CR18} that support \update, yet in suboptimal~I/Os. We hope that the discussion below provides the intuition to follow the full details in the sequel. 

Previous cache-oblivious priority queues~\cite{BFMZ04,CR04,CR18} are based on a simple idea. Their basic structure has a logarithmic number of levels, where level $i$ has two arrays, or buffers, of size roughly $2^i$. These buffers are called the \emph{front} and \emph{rear} buffers. They contain key-priority pairs or a key-delete message (described later). The idea is that the front buffers are sorted, with everything in the $i$-th front buffer having smaller priorities than everything in the $(i+1)$-th front buffer. The items in the rear buffers do not have this rigorous ordering, but instead must be larger than the items in the rear buffer at the smaller levels. When an \update\ operation occurs, the key-priority pair gets placed in the first rear buffer; when a \extmin\ operation occurs, the key-priority pair with the smallest priority is removed from the first front buffer. Every time a level-$i$ buffer gets too full or empty relative to its target size of~$2^i$, this is fixed by moving things up or down as needed, and by moving things from the rear to the front buffer if that respects the ordering of items in the front buffer. This resolution of problems is done efficiently using a scan of the affected and neighbouring levels. Thus, looking in a simplified manner at the lifetime of an \update d item, it will be inserted in the smallest rear buffer, be pushed down to larger rear buffers as they overflow, be moved from a rear buffer to a front buffer once it has gone down to a level where its priority is compatible with those in the corresponding front buffer, then moves up from the front buffer to smaller front buffers as they underflow, and is finally removed from the smallest front buffer during an \extmin. Thus, during its lifetime, it could be moved from one level to another a total of $\OO{\log_2 \frac{N}{B}}$ times at an I/O-cost of $\OO{\frac{1}{B}}$ per level, for a total cost of $\OO{\frac{1}{B} \log_2 \frac{N}{B}}$ I/Os. One detail is that when an item moves from a rear to a front buffer, we want to make sure that no items in larger levels with the same key and larger priority are ever removed. This is done through special delete messages, which stay in the rear buffers and percolate down, removing any key-priority pairs with the given key that they encounter in their buffer or the corresponding front buffer.

The problem with this approach is that the base-2 logarithm seems unavoidable, with the simple idea of a geometrically increasing buffer size. So here instead we use the more complicated recursion introduced with the cache-oblivious $x$-box  structure \cite{BDFILM10} and also used in the cache-aware hashing data structures \cite{IP12}. In its simplest form, used for a dictionary, an $x$-box has three buffers: top, middle and bottom (respectively of approximate size $x$, $x^{1.5}$ and $x^2$), as well as $\sqrt{x}$ recursive \emph{upper-level} $\sqrt{x}$-boxes (ordered logically between the top and middle buffers) and $x$ recursive \emph{lower-level} $\sqrt{x}$-boxes (ordered logically between the middle and bottom buffers). Data in each buffer is sorted, and all keys in a given recursive buffer are smaller than all keys in subsequent recursive buffers in the same level (upper or lower). There is no enforced order among keys in different buffers or in a recursive upper- or lower-level $\sqrt{x}$-box. The key feature of this construction is that the top/middle/bottom buffers have the same size as the neighbouring recursive buffers: the top buffer has size $x$, the top buffers of the upper-level recursive $\sqrt{x}$-boxes have total size $x$; the middle buffer, sum of the bottom buffers of the upper-level, and sum of the top buffers of the lower-level recursive structures all have size $x^{1.5}$; the sum of the bottom buffers of the lower-level recursive structures and the bottom buffer both have size $x^2$. Therefore, when for example a top buffer overflows, it can be fixed by moving excess items to the top buffers of the top recursive substructures. In a simplified view with only insertions, as buffers overflow, an item over its lifetime will percolate from the top buffer to the upper-level substructures, to the middle buffer, to the lower-level substructures, and to the bottom buffer, with each overflow handled only using scans. Assuming a base case of size $M$, there will be $\OO{\log_M N}$ times that an item will move from one buffer to another and an equal number of times that an item will pass through a base case. One major advantage of this recursive approach, is that an item will pass though a small base case not just once at the top of the structure, as in the previous paragraph, but many times. 

We combine these ideas to form the $x$-treap, described at a high level as follows: Everywhere an $x$-box has a buffer, we replace it with a front and a rear buffer storing key-priority pairs. The order used by the $x$-box is imposed on the keys, not the priorities. The order imposed on priorities in the previous cache-oblivious priority queues are carried over and imposed on the priorities in different levels of the $x$-treap; this is aided by the fact that the buffers in the $x$-treap form a DAG, thus the buffers where items with a given key can appear, form a natural total order. Hence, this forms a treap-like arrangement where we use the keys for order in one dimension and priorities for order in the other. We invoke separate trivial base case structure at a size smaller than a fixed value, e.g. the main memory size in a cache-aware setting; it stores items in no particular order and thus supports fast insertion of items when a neighbouring buffer adds them ($\OO{\frac{1}{B}}$), but slow ($\OO{M^\epsilon}$ amortized) removal of items with small priorities to fix the underflow of a front buffer above. In its typical hypothetical lifetime, an item will be inserted at the top in the rear buffer, percolate down $\OO{\log_{\frac{M}{B}} \frac{N}{B}}$ levels and base cases at a cost of $\OO{\frac{1}{B}}$ amortized each, move over to a front buffer, then percolate up $\OO{\log_{\frac{M}{B}} \frac{N}{B}}$ levels at a cost of $\OO{\frac{M^\epsilon}{B}}$ amortized each. Thus, the total amortized cost for an item that is eventually removed by an \extmin\ is $\OO{\frac{M^\epsilon}{B}\log_{\frac{M}{B}} \frac{N}{B}}$. 

However, we want the amortized cost for an item that is inserted via \update\ to be much faster than this, i.e. $\OO{\frac{1}{B}\log_{\frac{M}{B}} \frac{N}{B}}$. This requires additional observations and tricks. The first is that, unlike Brodal et al., we do not use delete-type messages that percolate down to eliminate items with larger than minimum priority in order to prevent their removal from \extmin. Instead, we adopt a much simpler approach, and use a hash table to keep track of all keys that have been removed by an \extmin, and when an \extmin\ returns a key that has been seen before, it is discarded and \extmin\ is repeated. The second trick is to simply ensure that each buffer has at most one item with each key (and remove key-priority pairs other than the one with the minimum priority among those with the same key in the buffer). This has the effect that if there are a total of $u$ updates performed on a key before it is removed by an \extmin, the total cost will involve up to $\OO{u \log_{\frac{M}{B}} \frac{N}{B}}$ percolations down at a cost of $\OO{\frac{1}{B}}$, but only $\OO{\log^2_{\frac{M}{B}} \frac{N}{B}}$ percolations up at a cost of $\OO{\frac{M^\epsilon}{B}}$ amortized each. After the \extmin, some items may still remain in the structure and will be discarded when removed by \extmin. However, due to the no-duplicates-per-level property there will only be $\OO{\log_{\frac{M}{B}} N}$ such items (called \emph{ghosts}) each of which will incur an amortized cost of at most $\OO{\lceil \frac{M^\epsilon}{B}\log_{\frac{M}{B}} \frac{N}{B}\rceil}$, where the ceiling accounts for accessing the hash table. Thus the total amortized cost for the lifetime of the $u$ \update s and one \extmin\ involving a single key is 
$O\left(\frac{u}{B}\log_{\frac{M}{B}} \frac{N}{B} +\lceil \frac{M^\epsilon}{B}\log_{\frac{M}{B}} \frac{N}{B}\rceil \log_{\frac{M}{B}} \frac{N}{B} \right).$ 
This cost can be apportioned in the amortized sense by having the \extmin\ cost $\OO{\lceil \frac{M^\epsilon}{B}\log_{\frac{M}{B}} \frac{N}{B}\rceil \log_{\frac{M}{B}} \frac{N}{B}}$ amortized~I/Os and each update cost
$\OO{\frac{1}{B}\log_{\frac{M}{B}} \frac{N}{B} }$ amortized~I/Os, assuming that the treap finishes in an empty state and more importantly that no item can be \update d after it has been \extmin 'd. 

The details that implement these rough ideas consume the rest of the paper. One complication that eludes the above discussion is that items don't just percolate down and then up; they could move up and down repeatedly and this can be handled through an appropriate potential function. The various layers of complexity needed for the $x$-treap recursion combined with the front/rear buffer idea, various types of over/underflows of buffers, a special base case, having the middle and bottom buffers be of size $x^{1+\frac{\alpha}{2}}$ and $x^{1+\alpha}$ for a suitable parameter $\alpha$ rather than $x^{1.5}$ and $x^2$ as described above, and a duplicate-catching hash table, result in a complex structure with an involved potential analysis, but that follows naturally from the above high-level description.

\section{Cache-oblivious $x$-Treap}
\label{sec:xtreap}

Given real parameter $\alpha \in (0,1]$ and \emph{key range} $\left[k_{\min}, k_{\max}\right)\subseteq \mathbb{R}$, an $x$-\emph{treap} $D$ stores a set of at most $2\left(D.x\right)^{1+\alpha} $ \emph{elements} $\left(\ast,k,p\right)$ associated with a \emph{key} $k\in \left[D.k_{\min}, D.k_{\max}\right)$ and a \emph{priority} $p$ from a totally ordered set. $D$ represents a set $\rep$ of pairs (key, priority), such that a particular key $k$ contained in $D$ is represented to have the smallest priority $p$ of any element with key $k$ stored in $D$, unless an element with key $k$ and a smaller priority has been removed from the structure. In particular, we call the key and priority \emph{represented}, when the pair (key, priority) $\in \rep$. A \emph{representative} element contains a represented key and its represented priority.  
Formally, we define:
$$
\rep := \bigcup_{\{k|\left(k , p \right)\in D\}}
\left \{ \left(k,\min_p \left(k, p \right) \in D \right)\right \}
$$

\noindent The proposed representation scheme works under the assumption that a key that is not represented by the structure anymore, cannot become represented again. In other words, a key returned by operation \extmin\ cannot be \ins ed to the structure again. 
The following \emph{interface operations} are supported:


\begin{itemize}
	\item \bi $\left(D, e_1, e_2 , \ldots, e_{b}\right)$: For constant $c\in \left(0,\frac{1}{3}\right]$, insert $b \leq c\cdot D.x$ elements $e_1, e_2 , \ldots, e_{b}$  to $D$, given they are key-sorted with keys  $e_i.k \in \left[D.k_{\min}, D.k_{\max}\right), i\in \left[1,b\right]$. 
	
	\bi\ adds the pairs $\left(e_i.k, e_i.p\right)$ to $\rep$ with key $e_i.k$ that is not contained in $D$ already. \bi\ decreases the priority of a represented key $e_i.k$ to $e_i.p$, if its represented priority is larger than $e_i.p$ before the operation. More formally, let $X_{new}$ contain the inserted pairs $\left(e_i.k, e_i.p\right)$ with $e_i.k\notin \rep$. Let $X_{old}$ contain the pairs in $\rep$ with an inserted key, but with larger priority than the inserted one, and let $X_{dec}$ contain these inserted pairs. After \bi, a new $x$-treap $D'$ is created where $D'.rep  = \rep \cup X_{new} \cup X_{dec} \backslash X_{old}$.
	
	%
	%
	\item \be $\left(D\right)$: For constant $c\in \left(0,\frac{1}{4}\right]$, remove and return the at most $c\cdot D.x$ elements $\left(k , p \right)$ with the smallest priorities in $D$. 
	
	\be\ removes the pairs $X_{\min}$ from $\rep$ with the at most $c\cdot D.x$ smallest priorities. Let $X_{key}$ contain the pairs in $D$ with keys in $X_{\min}$. After \be, a new $x$-treap $D'$ is created where $D'.rep  = \rep \backslash X_{\min} \backslash X_{key}$.
\end{itemize}

\begin{theorem}\label{thm:xtreap}
	An $x$-treap $D$ supports operation \be\ in $\OO{\lambda^{\frac{\alpha}{1+\alpha}}\frac{1+\alpha}{B} \log_{\lambda} D.x}$ amortized I/Os per element and operation \bi\ in  $\OO{\frac{1+\alpha}{B}\log_{\lambda} D.x}$ amortized I/Os per element, using $\OO{\frac{\dx^{1+\alpha}}{B}\log_{\lambda} D.x}$ blocks, for some $\lambda \in \left[2, N \right]$ and for any real $\alpha\in (0,1]$. 
\end{theorem}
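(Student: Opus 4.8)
\noindent\emph{Proof plan.} The plan is to fix the recursive layout and invariants of the $x$-treap, reduce \bi\ and \be\ to a constant-size toolbox of auxiliary routines acting on adjacent buffers, and then charge every I/O against one potential function that separates an element's remaining downward travel (cheap) from its remaining upward travel (expensive, because of the base cases).

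\textbf{Layout and invariants.} A $D.x$-treap has three levels, \emph{top}, \emph{middle} and \emph{bottom}, each holding a \emph{front} and a \emph{rear} buffer of target sizes $\Theta(D.x)$, $\Theta((D.x)^{1+\alpha/2})$ and $\Theta((D.x)^{1+\alpha})$; logically between the top and middle levels lie $\Theta(\sqrt{D.x})$ recursive \emph{upper} $\sqrt{D.x}$-treaps and between the middle and bottom levels $\Theta((D.x)^{(1+\alpha)/2})$ recursive \emph{lower} $\sqrt{D.x}$-treaps, the sizes chosen so that the total top-buffer capacity of the upper treaps matches the top buffer, the total bottom-buffer capacity of the upper treaps and the total top-buffer capacity of the lower treaps each match the middle buffer, and the total bottom-buffer capacity of the lower treaps matches the bottom buffer; the recursion bottoms out at a \emph{base case} that keeps its $\Theta(\lambda)$ elements in no particular order. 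The invariants to preserve are: (i) at each level the key ranges of the recursive sub-treaps partition $[D.k_{\min},D.k_{\max})$ and every buffer is key-sorted; (ii) the buffers a fixed key may occupy form a path in the buffer DAG, hence a total order, along which the \emph{front} buffers are priority-sorted, so the minima of $\rep$ sit in the topmost front buffers; (iii) each buffer holds at most one element per key, the minimum-priority one; (iv) each buffer stays within its target size and at least a constant fraction full unless the whole sub-treap is below its lower threshold. A root-to-leaf chain of buffers then has $\Theta((1+\alpha)\log_\lambda D.x)$ of them: $\Theta(\log\log_\lambda D.x)$ squarings of the size reach the base case, the buffer count roughly doubles at each such step, and a base case stores $\Theta(\lambda)$ rather than $\Theta(\lambda^{1+\alpha})$ elements --- which is where the factor $1+\alpha$ in the bounds originates.

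\textbf{Algorithms.} I would implement \bi\ by merging the $b\le c\cdot D.x$ key-sorted inputs into the top rear buffer, keeping per key only the minimum priority and promoting an element whose priority beats the matching front buffer; an overflowing rear buffer is repaired by \fld, which scans it with the top buffers of the neighbouring recursive treaps, \dist s the excess into them and recurses, and analogous \fld\ cascades run through the middle level, the lower treaps and the bottom buffer; a recursive sub-treap that reaches its capacity is handled by \spl\ and \alloc, an underflowing buffer by \flu. \be\ is the mirror image: remove the minimum-priority elements from the top front buffer, and on underflow call \flu, which pulls the next elements up from the front buffers of the upper treaps (recursively extracting their minima), then from the middle front buffer, the lower treaps and the bottom front buffer, while \mrg\ and \shr\ repair underflowing sub-treaps; at a leaf, \be\ runs I/O-efficient selection on the $\Theta(\lambda)$ unordered base-case elements in $\OO{\lambda/B}$ I/Os to pull out its $\Theta(\lambda^{1/(1+\alpha)})$ smallest. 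Each auxiliary call touches $\OO{1}$ adjacent buffers of some $y$-treap, moves $\Theta(y)$ elements, costs $\OO{y/B}$ I/Os, and spawns $\OO{1}$ recursive calls one size-level down.

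\textbf{Potential and accounting.} I would take $\Phi=\Phi_{\downarrow}+\Phi_{\uparrow}+\Phi_{\mathrm{fill}}$: $\Phi_{\downarrow}$ gives each element of a rear buffer weight $\frac1B$ times the number of buffers still below it on its DAG-path; $\Phi_{\uparrow}$ gives each element of a front buffer weight $\frac{\lambda^{\M}}{B}$ times the number of buffers still above it; and $\Phi_{\mathrm{fill}}$ sums over buffers a term proportional to $\frac1B$ (resp.\ $\frac{\lambda^{\M}}{B}$) times how far a buffer is over (resp.\ under) its target, so an over/underflow carries enough credit to pay the $\OO{y/B}$ scan that repairs it. Then \bi\ deposits $\OO{\frac{1+\alpha}{B}\log_\lambda D.x}$ of $\Phi_{\downarrow}$ per inserted element and triggers only downward motion, each one-buffer move of $\Theta(y)$ elements releasing $\Theta(y/B)$ from $\Phi_{\downarrow}$ to pay its \fld\ (de-duplication only drops elements and lowers $\Phi$; \spl\ is paid by $\Phi_{\mathrm{fill}}$), which yields the claimed $\OO{\frac{1+\alpha}{B}\log_\lambda D.x}$ amortized per inserted element. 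Symmetrically \be\ deposits $\OO{\lambda^{\M}\frac{1+\alpha}{B}\log_\lambda D.x}$ of $\Phi_{\uparrow}$ on every element it pulls up (including the rear-to-front transitions it forces deeper down), each one-buffer upward move of $\Theta(y)$ elements releases $\Theta(\frac{\lambda^{\M}}{B}\cdot y)$, and a base-case selection costing $\OO{\lambda/B}$ is absorbed because it outputs $\Theta(\lambda^{1/(1+\alpha)})$ elements and $\frac{\lambda^{\M}}{B}\cdot\lambda^{1/(1+\alpha)}=\frac{\lambda}{B}$ --- giving $\OO{\lambda^{\M}\frac{1+\alpha}{B}\log_\lambda D.x}$ amortized per element. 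The space bound $\OO{\frac{(D.x)^{1+\alpha}}{B}\log_\lambda D.x}$ follows by solving the recurrence for the number of allocated buffers, whose dominant $\Theta((D.x)^{1+\alpha})$ term recurs over $\Theta((1+\alpha)\log_\lambda D.x)$ size-levels.

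\textbf{Main obstacle.} The hard part is that elements do not move monotonically: a \deckey-style re-insertion with a smaller priority, or a later underflow, can send an element that had already percolated down back up, and conversely, so $\Phi_{\downarrow}$ and $\Phi_{\uparrow}$ cannot be literal ``remaining distances''. The weights, and the strength of the invariants, must be calibrated so that every one of the many over/underflow cases is locally covered --- the interplay of per-buffer de-duplication with the front-buffer priority order, the \spl, \mrg\ and \shr\ of recursive sub-treaps, the precise moment an element acquires up-potential and which operation is charged for it, and the asymmetry between cheap downward and expensive upward moves --- while the same local moves must still sum to two different per-element bounds. Executing this case analysis for each auxiliary routine, and checking that \be\ never needs to re-extract a key that already left the structure (the standing representation assumption), is where the real work lies.
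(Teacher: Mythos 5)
Your layout, base case, auxiliary operations and level count are essentially the paper's (an $x$-box-style recursion with front/rear buffers, a size-$\Theta(\lambda)$ unordered base case, and \re/\flu/\fld/\spl-type repairs, with $\Theta((1+\alpha)\log_\lambda D.x)$ buffers on a root-to-leaf chain). The gap is in the accounting, and it is exactly the point you defer to the ``main obstacle'' paragraph: with $\Phi_{\uparrow}$ defined as $\frac{\lambda^{\alpha/(1+\alpha)}}{B}$ times the remaining upward distance \emph{for every element sitting in a front buffer}, the up-potential is acquired the moment an element enters a front buffer, and that moment is typically triggered by \bi, not \be. Concretely, a \re\ of the bottom buffer (called inside \bi) can move $\Theta\left((D.x)^{1+\alpha}\right)$ elements from rear to front, and an \init\ of a fresh subtreap places $\Theta\left(y^{1+\alpha}\right)$ elements into front buffers with $\Theta(\log_\lambda D.x)$ buffers above them; under your potential these steps raise $\Phi$ by $\Theta\left(\frac{\lambda^{\alpha/(1+\alpha)}}{B}(D.x)^{1+\alpha}\log_\lambda D.x\right)$, which neither the \bi\ budget of $O\left(\frac{1+\alpha}{B}\log_\lambda D.x\right)$ per inserted element nor a single \be\ (which returns only $\Theta(D.x)$ elements) can pay. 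This is not a calibration issue: the potential must be restructured so that the $\lambda^{\alpha/(1+\alpha)}$ weight is attached only to the \emph{deficit} of an underfull front buffer (scaled by the number of levels below), i.e.\ it is created only when \be/\flu\ drains fronts and is therefore chargeable to \be, while rear-to-front transitions are financed by giving rear elements a down-potential with a factor $2$ so that the release on the rear side covers the (unweighted) gain on the front side. That is how the paper's per-buffer potential $\Phi_f+\Phi_r$ is built, and it is the step your proposal is missing.

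A secondary, smaller gap: you assert each auxiliary call touches $O(1)$ adjacent buffers, costs $O(y/B)$ I/Os and spawns $O(1)$ recursive calls, but \flu\ must select the globally smallest priorities spread across up to $\Theta\left((D.x)^{(1+\alpha)/2}\right)$ subtreaps' front top buffers, which cannot be done by a constant number of scans cache-obliviously; the paper routes this through an auxiliary insert/extract-min-only priority queue (cost $O\left(\frac{|b|}{B}\log_{\lambda/B}|b|\right)$ per \flu, plus possibly many recursive \flu s as individual subtreaps are exhausted), and this multi-way selection is a genuine component of the \be\ cost analysis rather than a detail. Your space bound, base-case costs, and the origin of the $(1+\alpha)$ factor are correct and match the paper.
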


The lemmata in this section prove the above theorem. 
The structure is recursive. The base case is described separately in Subsection \ref{ssec:base}. The base case structure is used when $D.x \leq c' \lambda^{\frac{1}{1+\alpha}}$ (for an appropriately chosen constant $c'>0$). Thus assuming $D.x> c' \lambda^{\frac{1}{1+\alpha}}$, we define an $x$-treap to contain three \emph{buffers} (which are arrays that store elements) and many $\sqrt{x}$-treaps (called \emph{subtreaps}). Specifically, the \emph{top}, \emph{middle} and \emph{bottom} buffers have \emph{sizes} $D.x,  \dx^{1+ \frac{\alpha}{2}}$ and $\dx^{1+\alpha}$, respectively. Each buffer is divided in the middle into a \emph{front} and a \emph{rear} \emph{buffer}. The subtreaps are divided into the \emph{upper} and the \emph{lower level} that contain at most $\frac{1}{4}\dx^{\frac{1}{2}}$ and $\frac{1}{4}\dx^{\frac{1+\alpha}{2}}$ subtreaps, respectively.  
Let $|b|$ denote the \emph{size} of a buffer $b$.
We define the \emph{capacity} of an $x$-treap $D$ to be the maximum number of elements it can contain,  which is $D.x + \frac{5}{4}\dx^{1+\frac{\alpha}{2}} + \frac{5}{4}\dx^{1+\alpha} < 2\dx^{1+\alpha}$.

We define a partial order ($\preceq$) using the terminology ``above/below'' among the buffers of an $x$-treap and all of the buffers in recursive subtreaps or base case structures. In this order we have top buffer $\preceq$ upper level recursive subtreaps
$\preceq$ middle buffer
$\preceq$ lower level recursive subtreaps
$\preceq$ bottom buffer. 

Along with all buffers of $D$, we store several additional pieces of bookkeeping information: a counter with the total number of elements stored in $D$ and an index indicating which subtreap is stored in which space in memory.

\begin{figure}
	\begin{center}
		\includegraphics[scale=0.4]{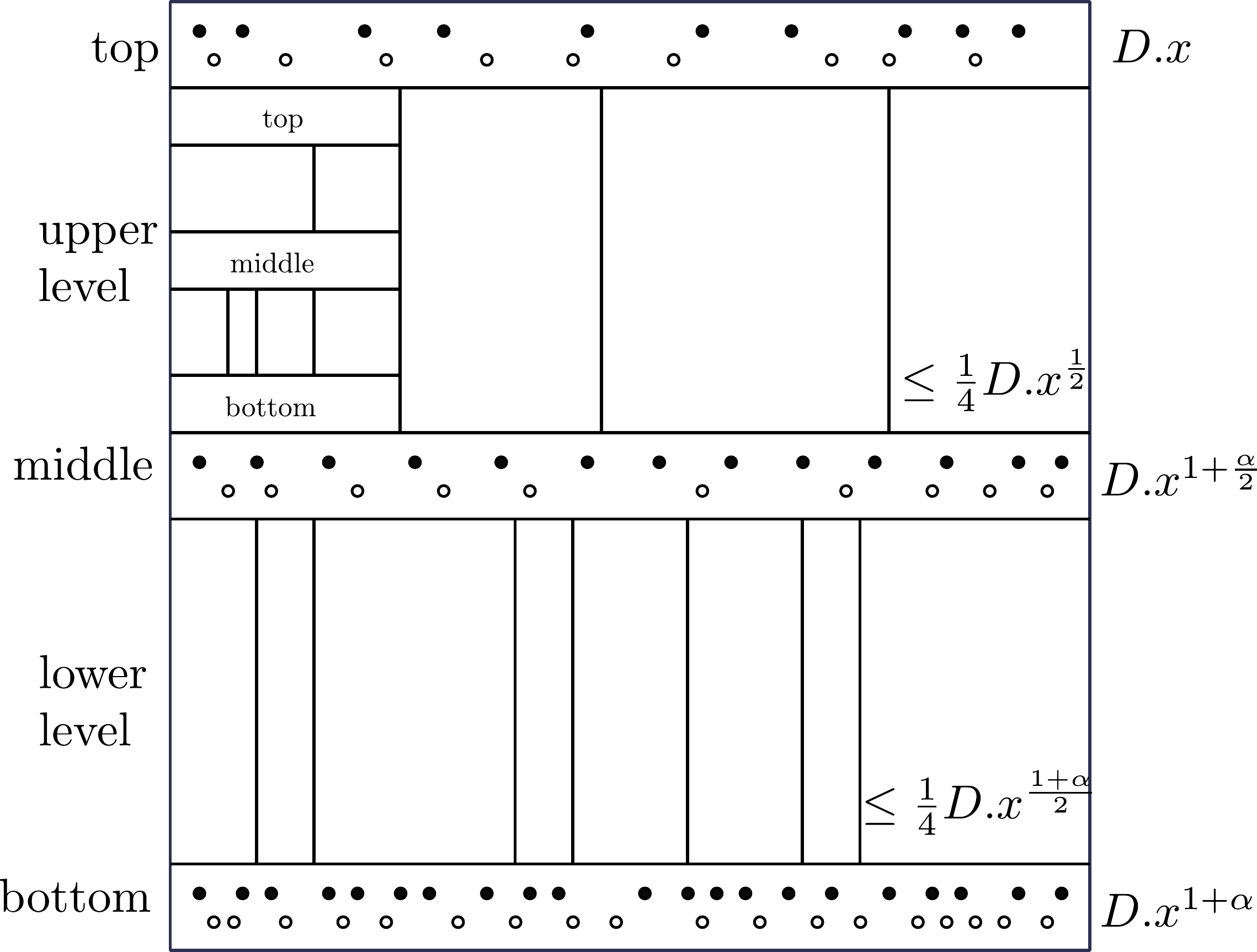}
	\end{center}
	\caption{Overview of an $x$-treap $D$ on ``key'' $\times$ ``partial order''  space. Black/white dots represent elements in the front/rear buffers, respectively. All buffers are resolved. Buffer sizes and a level's maximum number of subtreaps appear on the right-hand side.}
\end{figure}

\subsection{Invariants} An $x$-treap $D$ maintains the following invariants with respect to every one of its top/middle/bottom buffers $b$. The invariants hold after the execution of each interface operation, but may be violated during the execution. They allow changes to $D$ that do not change $\rep$.

\begin{enumerate}
	\item \label{inv:type} The front and rear buffers of $b$ store elements sorted by key and left-justified.
	
	\item \label{inv:lprio} The front buffer's elements' priorities are smaller than the rear buffer's elements' priorities.
	
	\item \label{inv:prio} The front buffer's elements' priorities are smaller than all elements' priorities in buffers below $b$ in $D$.
	
	\item \label{inv:key} For a top or middle buffer $b$ with key range  $\left[ b.k_{\min}, b.k_{\max} \right)$, the $r$ upper or lower subtreaps $D_i, i\in \{1,r\}$, respectively, have distinct key ranges $\left[D_i.k_{\min} , D_i.k_{\max}\right)$, such that $b.k_{\min} = D_1.k_{\min}< D_1.k_{\max} = D_2.k_{\min} < \ldots < D_r.k_{\max} = b.k_{\max}$. 
	
	\item \label{inv:empty} If the middle/bottom buffer $b$ is not empty, then at least one upper/lower subtreap is  not empty, respectively.
\end{enumerate}


\subsection{Auxiliary operations}

The operations \bi\ and \be\ make use of the following \emph{auxiliary operations}: 

\begin{itemize}

	\item Operation \re$\left(D,b\right)$. We say that a buffer $b$ is \emph{resolved}, when the front and rear buffers contain elements with pairs (key,priority) $\left(k,p\right)$, such that $k$ is a represented key, and when the front buffer contains those elements with smallest priorities in the buffer. To resolve $b$, operation \re\
	assigns to the elements with represented keys, the key's minimum priority stored in $b$. Also, it removes any elements with non-represented keys from $b$. \re\ restores Invariant \ref{inv:lprio} in $b$, when it is temporarily violated by other (interface or auxiliary) operations that call it. 
	
	\item Operation \init$\left(D, e_1, e_2 , \ldots, e_{b}\right)$ distributes to a new $x$-treap $D$, $\frac{1}{4}\dx \leq b \leq\frac{1}{2} \dx^{1+\alpha}$ elements $e_i, i\in [1,b]$ from a temporary array (divided in the middle into a front and a rear array, respecting Invariants~\ref{inv:type} and \ref{inv:lprio}).
	
	\item Operation \flu$\left(D\right)$ ensures that the front top buffer of $D$ contains at least $\frac{1}{4}D.x$ elements (unless all buffers of $D$ contains less elements altogether, in which case \flu\ moves them all to the top front buffer of $D$). By Invariants \ref{inv:lprio} and \ref{inv:prio}, these are the elements in $D$ with smallest priority.
	
	\item Operation \fld$\left(D\right)$ is called by \bi\ on an $x$-treap $D$ whose bottom buffer contains between $\frac{1}{2}\dx^{1+\alpha}$ and $\dx^{1+\alpha}$ elements. 	It moves to a new temporary array, at least $\frac{1}{6}\dx^{1+\alpha}$  and at most $\frac{2}{3}\dx^{1+\alpha}$ elements from the bottom buffer of $D$. It ensures that the largest priority elements are removed from $D$.
	
	\item Operation \spl$\left(D \right)$ is called by \bi\ on an $x$-treap $D$ that contains between $\frac{1}{2}\dx^{1+\alpha}$ and $\dx^{1+\alpha}$ elements. It moves to a new temporary (front and rear) array, the at most $\frac{1}{3}\dx^{1+\alpha}$
	elements with largest keys in $D$.
	
\end{itemize}

\subsubsection{Resolving a buffer} 

\paragraph{Algorithm.} 

Auxiliary operation \re\ on a buffer $b$ of an $x$-treap $D$ is called by the auxiliary operation \flu\ and by the interface operation \bi. It makes use of two temporary auxiliary arrays of size~$|b|$. \re$\left(D, b\right)$ is implemented as following: 

\begin{enumerate} 
	\item \label{res:0} Determine the maximum priority $p_{\max}$ in the front buffer (by a scan). Return, if the front buffer is empty.
	\item \label{res:1} 2-way merge the elements in the front and rear buffers into a temporary array (by simultaneous scans in increasing key-order). Empty the front and rear buffers.
	\item \label{res:2} Determine the representative elements in the temporary array (by a scan) and write them to a second temporary array (by another scan): specifically for each key, write only the element with the smallest priority to the secondary temporary array. 
	
	\item \label{res:3} Scan the second temporary array, writing the elements with priority smaller or equal to $p_{\max}$ to the front buffer, and with priority larger than $p_{\max}$ to the rear buffer. Discard the temporary arrays.
	
	\item \label{res:4} Update the counter of $D$.
\end{enumerate}

\paragraph{Correctness.} 

After calling \re\ on a buffer $b$, elements from $b$ are allowed to be moved to other buffers, since Invariants \ref{inv:type} and \ref{inv:lprio} are maintained. This is because after Steps \ref{res:1}, \ref{res:2} and \ref{res:3}, the front and rear buffers of $b$ contain a representative element for every represented key in $b$ separated by priority $p_{\max}$ (computed in Step \ref{res:0}). Step \ref{res:4} accounts for the elements ignored in Step \ref{res:2}. 

See Figure \ref{fig:resol} for an illustration of the operation.

\begin{figure}
	
	\begin{center}
		\includegraphics[scale=0.5]{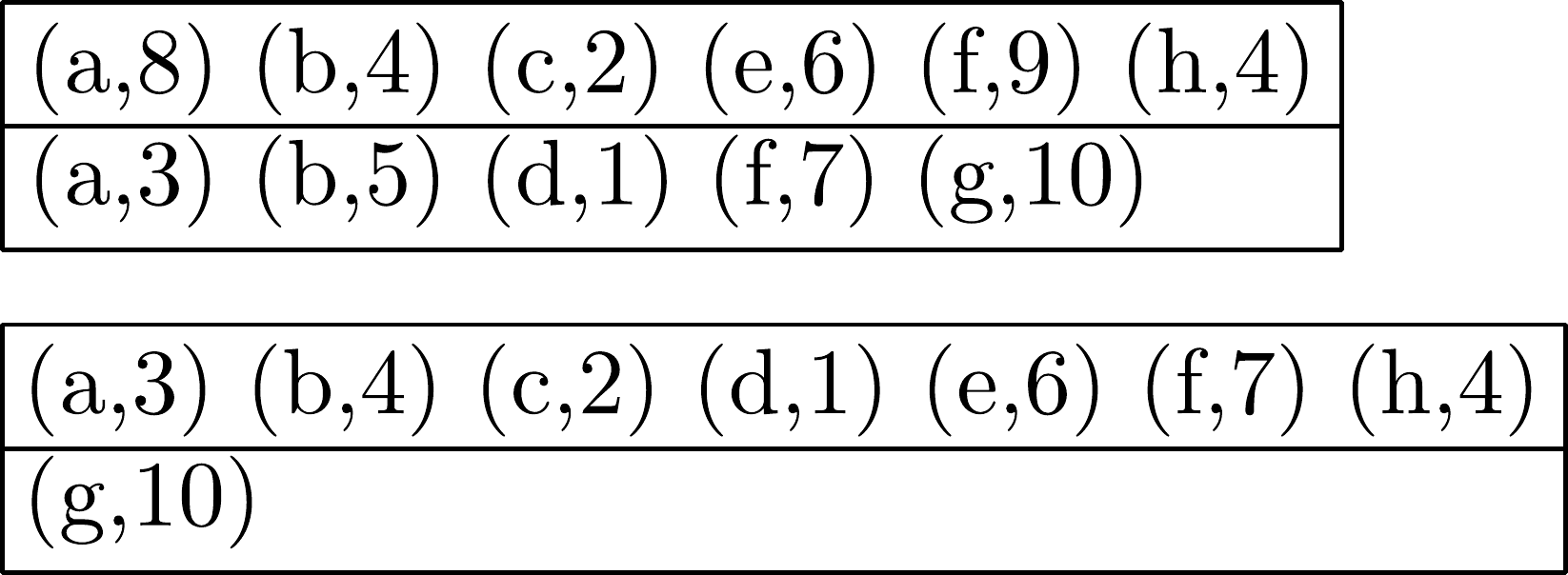}
	\end{center}
	\caption{A buffer before and after operation \re\ (respectively, above and below).}
	\label{fig:resol}
\end{figure}

\subsubsection{Initializing an $x$-treap}

\paragraph{Algorithm.} 

Auxiliary operation \init\ is called by the auxiliary operation \flu\ and by the interface operation \bi. It allocates an empty $x$-treap $D$ and distributes the $b\in \left[\frac{1}{4}\dx, \frac{1}{2}\dx^{1+\alpha} \right] $ elements $e_i$, $i\in\left[1,b\right]$  from a temporary key-sorted array (divided in the middle into a front and a rear array) to the buffers of $D$. \init$\left(D, e_1. \ldots, e_b\right)$ is implemented as following:

\begin{enumerate}
	\item \label{init:0} Create a new $x$-treap $D$ and move all elements in the temporary rear array to the rear bottom buffer of $D$.
	
	\item \label{init:1} Find the  $\left(\frac{1}{2} \dx^{1+\alpha}\right)$-th smallest priority in the temporary front array (by an order-statistics algorithm \cite{BFPRT73}) and move all elements in the array with larger priority to the front bottom buffer of $D$. 
	
	\item \label{init:2} Find the  $\left(\frac{1}{2} \dx\right)$-th smallest priority in the temporary front array and move all elements in the array with smaller priority to the front top buffer of $D$. 
	
	\item \label{init:3} Find the  $\left(\frac{1}{2} \dx\right)$-th smallest priority in the temporary front array and until the maximum number of upper level subtreaps has been reached:	\init\ a new upper subtreap with the $\frac{1}{2}\dx^{\frac{1+\alpha}{2}}$ key-next elements with smaller priority. 	 
	
	\item \label{init:4} Find the  $\left(\frac{1}{2} \dx^{1+\frac{\alpha}{2}}\right)$-th smallest priority in the temporary front array and move all elements in the array with smaller priority to the front middle buffer of $D$. 
	
	\item \label{init:5} Find the  $\left(\frac{1}{2} \dx^{1+\frac{\alpha}{2}}\right)$-th smallest priority in the temporary front array and until the maximum number of lower level subtreaps has been reached: \init\ a new lower subtreap with the $\frac{1}{2}\dx^{\frac{1+\alpha}{2}}$ key-next elements with smaller priority. 	 
	
	\item \label{init:6} Discard the temporary array and update the counters of $D$. 

\end{enumerate}

\paragraph{Correctness.} 

Operation \init\ moves the elements from the temporary array to a new $x$-treap in the following sequence: bottom rear buffer, top front buffer, upper subtreaps' front buffers, middle front buffer and lower subtreaps' front buffers, bottom front buffer. The recursive calls in Steps \ref{init:3} and \ref{init:5} ensure that the temporary array empties. All invariants are maintained. 

See Figure \ref{fig:init} for an illustration of the operation.

\begin{figure}
	\begin{center}
		\includegraphics[scale=0.5]{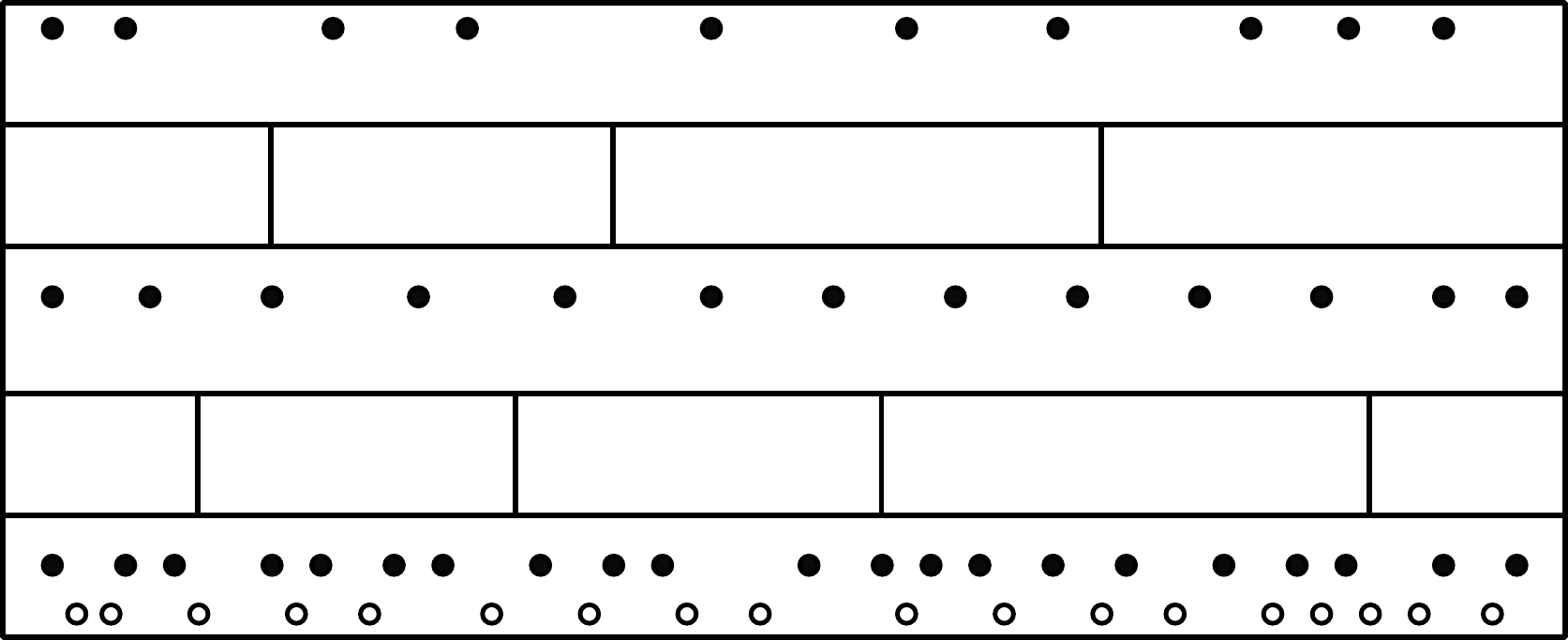}
	\end{center}
	\caption{A new $x$-treap after operation \init.}
	\label{fig:init}
\end{figure}

\subsubsection{Flushing up an $x$-treap}

\paragraph{Algorithm.} 

Auxiliary operation \flu\ on an $x$-treap $D$ is called only by the interface operation \be. It is implemented by means of the recursive subroutine \flu$\left(D, b\right)$ that also takes as argument a top or middle buffer $b$ of $D$ and moves to its front buffer, the elements with the (at most) $\frac{1}{4}|b|$ smallest represented priorities among the representative elements stored inside and below $b$ in $D$. The operation makes use of a temporary priority queue that supports only operations \ins\ and \extmin\ \cite{ABDHM07} (The structure can be easily modified to ). 
For a bottom buffer $b$, a non-recursive subroutine \flu$\left(D, b\right)$ simply calls \re\ on $b$. \flu$\left(D, b\right)$ is implemented as following:

\begin{enumerate}
	
	\item \label{flu:0} \re\ $b$ and \flu\ the top buffers of the subtreaps immediately below $b$.
	
	\item \label{flu:1} If the front buffer of $b$ contains $k <\frac{1}{4}|b|$ elements: Allocate a temporary array of size $|b|$ and a temporary priority queue $Q$. For every subtreap immediately below $b$: Remove all elements from its front top buffer and \ins\ them to $Q$ (by simultaneous scans). While the temporary array contains no more than $\frac{1}{4}|b| - k$ elements, do:
	
	\subitem \ref{flu:1}.1. \extmin\ one element $e$ from $Q$ and write it in the temporary array.  
	
	\subitem \ref{flu:1}.2. If $Q$ contains no more elements from the subtreap $D'$ that contained $e$: \flu\ the top buffer of~$D'$, remove all its elements and \ins\ them to $Q$.
	
	\subitem \ref{flu:1}.3. If $Q$ is empty and the temporary buffer contains $k'< \frac{1}{4}|b| - k$ elements: \flu\ the buffer $b'$ immediately below $b$ in $D$, find the  $\left(\frac{1}{4}|b|-k-k'\right)$-th smallest priority in the front buffer of $b'$ (by an external memory order-statistics algorithm \cite{BFPRT73}) and move its elements with smaller priority to the temporary array. Left-justify $b'$.
	
	\item \label{flu:2} Sort the elements in the temporary array by key. 2-way merge into the front buffer of $b$, the elements in the front buffer of $b$ and the temporary array (by simultaneous scans in increasing key-order). Discard the temporary array. 
	
	\item \label{flu:3} If $Q$ is not empty: \extmin\ all elements from $Q$ into a new temporary array, sort the array by key, move the elements left-justified back to the front top buffers of the subtreaps they were taken from (by simultaneous scans in increasing key-order), update the subtreaps' counters and discard the array. 
	
	\item \label{flu:4} Discard $Q$, update the counters of $D$ and remove all empty subtreaps immediately below $b$ (i.e. whose counter is $0$). 
	
	\item \label{flu:5} If there are no subtreaps immediately below $b$ and the front buffer $b'$ immediately below $b$ is not empty: Find the  $\left(\frac{1}{4}|b|^{\frac{1+\alpha}{2}}\right)$-th smallest priority in the front buffer of $b'$ (by an external memory order-statistics algorithm \cite{BFPRT73}), move the elements with smaller priority to a new temporary front array (by a scan), left-justify the front buffer of $b$ and \init\ a new subtreap with the elements in the array. Discard the temporary array.
	
	
\end{enumerate}

\paragraph{Correctness.} 

Operation \flu\ allows for accessing the representative elements with smallest represented priorities in $D$ by only accessing its front top buffer. Invariants \ref{inv:lprio} and \ref{inv:prio} imply that the next-larger represented priorities with respect to the front top buffer's maximum represented priority are stored in the upper subtreaps' front top buffers and in turn the next-larger ones are stored in the front middle buffer. Similarly, this holds between the middle buffer with respect to the lower subtreaps and the bottom buffer. 

The subroutine \flu$\left(D, b\right)$ respects this sequence when it moves elements with minimum represented priorities from front buffers to the front buffer of $b$. Specifically, Step \ref{flu:0} ensures that the representative elements in $b$ with priority smaller than the $\left(\frac{1}{4}|b|\right)$-th largest priority in $b$ are stored in its front buffer. It also ensures this recursively for the front top buffers of the subtreaps immediately below $b$. If the front buffer of $b$ contains less than $\frac{1}{4}|b|$ such elements, Step \ref{flu:1} attempts to move into a temporary array enough elements from below $b$ in $D$. The temporary priority queue is used (at Steps \ref{flu:1} and \ref{flu:1}.1) to ensure that indeed the smallest-priority representative elements are moved, first from the subtreaps (Step \ref{flu:1}.2) and, if not enough elements have been moved, from the buffer immediately below $b$ in $D$ (at Step \ref{flu:1}.3). At most two key-sorted runs are created in the temporary array (one from the subtreaps and one from the buffer immediatelly below $b$) which are merged to the front buffer of $b$ by Step \ref{flu:2}, while maintaining Invariants  \ref{inv:type} and \ref{inv:key}. Step \ref{flu:3} revokes the effects of the temporary priority queue, allowing it to be discarded at Step \ref{flu:4}, which also accounts for the moved elements. It also removes any empty subtreaps, maybe violating Invariant \ref{inv:empty} that is restored by Step \ref{flu:5}. 

See Figure \ref{fig:flu} for an illustration of the operation.

\begin{figure}
	
	\begin{center}
		\includegraphics[scale=0.5]{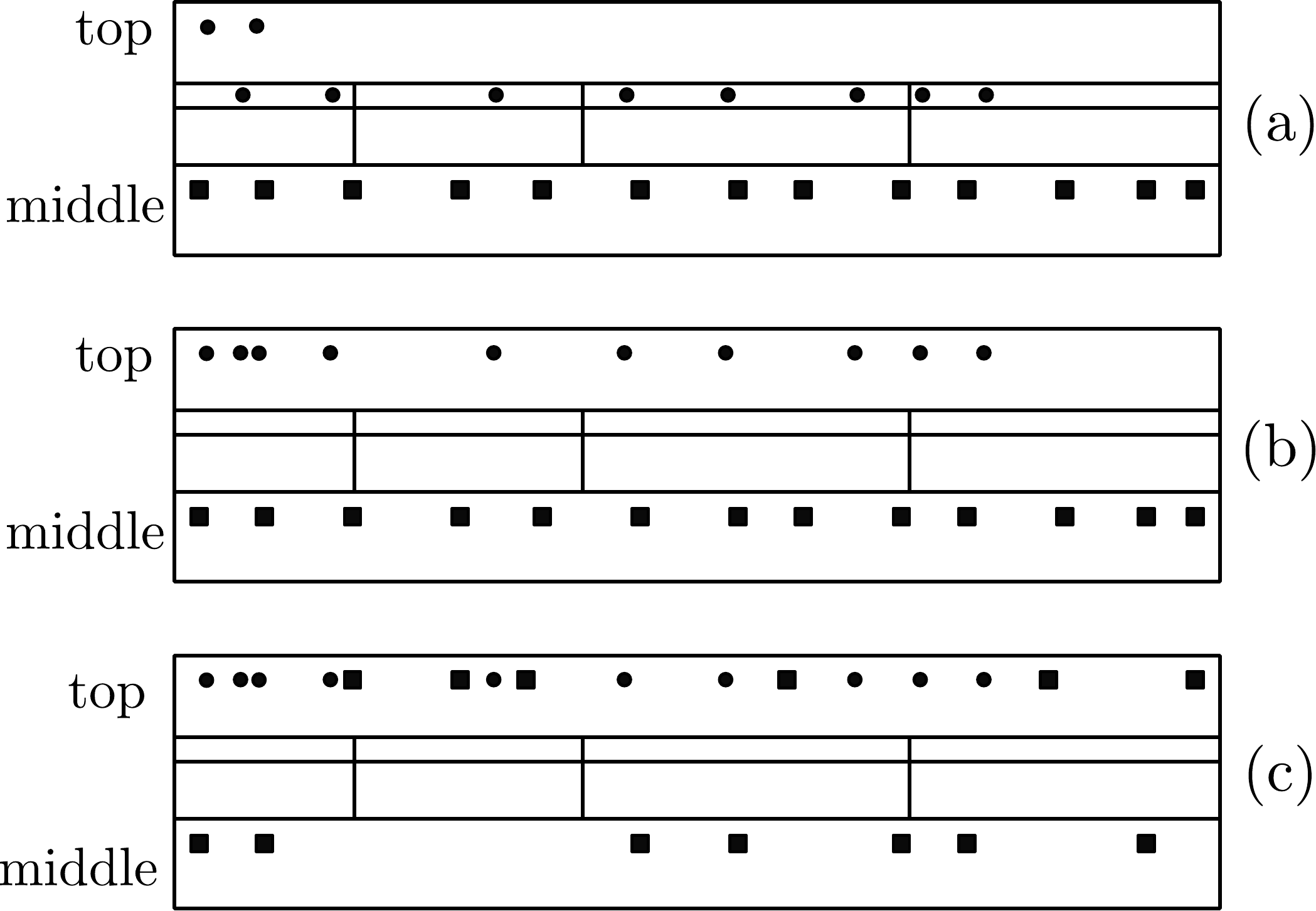}
	\end{center}
	\caption{The top/middle buffers and the top upper level buffers, (a) before operation \flu, (b) after Step \ref{flu:4} and (c) Step~\ref{flu:5}.}
	\label{fig:flu}
\end{figure}

\subsubsection{Flushing down an $x$-treap}
\paragraph{Algorithm.} 

Auxiliary operation \fld\ on an $x$-treap $D$ is 
called only by 
the interface operation \bi\ and returns a new temporary key-sorted array with at most  $\frac{2}{3}\dx^{1+\alpha}$ elements. \fld$\left(D\right)$ is implemented as following: 

\begin{enumerate}
	
	\item \label{fld:0} Move all elements from the bottom rear buffer of $D$ the the temporary rear array (by a scan).
	
	\item \label{fld:1} If Step \ref{fld:0} did not move more than $\left(\frac{1}{6}\dx^{1+\alpha}\right)$ elements: Find the $\left(\frac{1}{3}\dx^{1+\alpha}\right)$-th smallest priority in the bottom front buffer of $D$ (by an external memory order-statistics algorithm \cite{BFPRT73}). Move all elements in the bottom front buffer with larger priority to the temporary front array and left-justify the bottom front buffer (by a scan).
	
	
	\item \label{fld:2} 2-merge the two runs created by Steps \ref{fld:0} and \ref{fld:1} in the temporary array (by a scan).
	\item \label{fld:3} Update the counter of $D$.
\end{enumerate}

\paragraph{Correctness.} 

Operation \fld\ leaves the bottom rear buffer of $D$ empty and the bottom front buffer with at most $\frac{1}{3}\dx^{1+\alpha}$ elements. By Invariants \ref{inv:lprio} and \ref{inv:prio}, the largest priority elements of $D$ are in the bottom rear buffer and they are removed at Steps \ref{fld:0}. However, if they don't account for a constant fraction of $D$'s size, Step 
\ref{fld:1} removes such a fraction from the bottom front buffer, which contains the  next-smaller elements. Invariant \ref{inv:type} is maintained. Step \ref{fld:3} accounts for the removed elements. 

See Figure \ref{fig:fld} for an illustration of the operation.

\begin{figure}
	\begin{center}
		\includegraphics[scale=0.5]{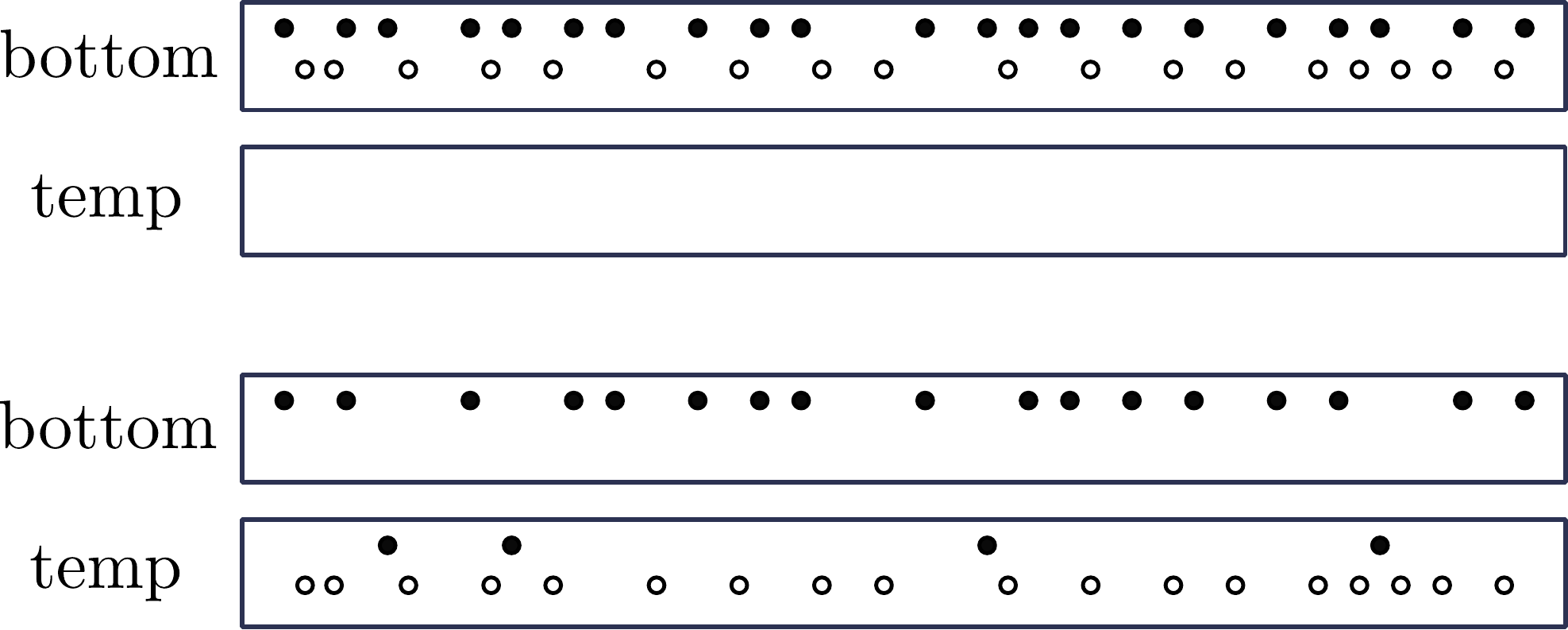}
	\end{center}
	\caption{The bottom buffer and temporary array before and after operation \fld\ (respectively, above and below).}
	\label{fig:fld}
\end{figure}

\subsubsection{Splitting an $x$-treap}

\paragraph{Algorithm.} 

Auxiliary operation \spl\  is called only by the interface operation \bi. It moves to a new temporary key-sorted array (divided in the middle into a front and a rear array) at most $\frac{1}{4}|b_i|$ key-larger elements from all buffers $b_i$ in $D$. \spl$\left(D\right)$ is implemented as following:

\begin{enumerate}
	
	\item \label{spl:0} Find the $\left(\frac{1}{4}\dx^{1+\alpha}\right)$-th smallest key in the front top/middle/bottom buffer of $D$ (by a scan). (Let this key be $k$ for the front buffer and $k'$ for the middle buffer.) Respectively, move all elements in the front buffers with larger key to three new front auxiliary arrays.

	\item \label{spl:1} Repeat Step \ref{spl:0} with respect to rear buffers.
	
	\item \label{spl:2} Update the counter of $D$.
	
	\item \label{spl:3} \spl\ the upper subtreap whose key range contains $k$. \spl\ the lower subtreap whose key range contains~$k'$. 
	
	\item \label{spl:4} Merge the elements in all front/rear auxiliary arrays to a new front/rear temporary array, respectively. 
	Discard all auxiliary arrays.  
\end{enumerate}

\paragraph{Correctness.} 

Operation \spl\ leaves all (front and rear) buffers of $D$ half-full. Since it operates on $x$-treaps that are more than half-full, whose bottom buffers contain a constant fraction of the total number of elements in the $x$-treap, the execution of Steps \ref{spl:0}, \ref{spl:1} and \ref{spl:3} moves at most $\frac{1}{2}\dx^{1+\alpha}$ elements to the temporary array. Step \ref{spl:2} accounts for the removed elements. All invariants are maintained. 

\subsection{Base case} \label{ssec:base} 
The $x$-treap is a recursive structure. When the $x$-treap stores few elements, we use simple arrays to support the interface operations and operation \flu. 

\begin{lemma}\label{lem:array}
	An $\OO{\lambda^{\frac{1}{1+\alpha}}}$-treap supports operation \bi\ in $\OO{1/B}$ amortized I/Os per element and operations \be\ and \flu\ in $\scan{\lambda^{\frac{\alpha}{1+\alpha}}}$ amortized I/Os per element, for some $\lambda \in \left[2, N \right]$ and for any real $\alpha\in (0,1]$. 
\end{lemma}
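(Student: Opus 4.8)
\emph{Proof plan.} The plan is to use the trivial ``store everything together'' structure. Since $D.x=\OO{\lambda^{\frac{1}{1+\alpha}}}$, a base-case treap holds at most $2\dx^{1+\alpha}=\OO{\lambda}$ elements and thus fits in $\OO{\lambda/B}=\OO{\dx^{1+\alpha}/B}$ blocks, and $\log_\lambda D.x=\OO{1}$ here; so the three claimed bounds are just the bounds of Theorem~\ref{thm:xtreap} with the logarithmic factor dropped, and it suffices to spend a constant number of scans of the whole structure per \be\ or \flu\ and essentially nothing per element for \bi. I would represent the base case by a constant number of left-justified arrays holding the elements in no particular order: a \emph{front} array $F$ of size at most $D.x$ (the ``front top buffer'' that the parent scans during its own \flu), a \emph{rear} array $R$ holding every other element, a small \emph{staging} array $I$ used only by \bi, and an element counter. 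A base case has no top/middle/bottom buffers or subtreaps, so the $x$-treap invariants are vacuous for it; the only properties I would maintain are that $F$ holds the smallest-priority represented elements of $D$, that every priority in $F$ is at most every priority in $R$, and that no key occurs twice in $F$ or twice in $R$. As nothing is kept sorted, the only non-trivial primitive is selection by priority or by key, costing $\OO{\lambda/B}$ by the linear-time order-statistics algorithm~\cite{BFPRT73}.

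For \bi\ I would append the $b\le c\,D.x$ key-sorted input elements to the end of $I$, rewriting the last $\OO{1}$ blocks of $I$ and appending $\OO{b/B}$ new ones, for $\OO{1+b/B}$ I/Os, letting duplicate keys pile up. Once $|I|$ reaches a fixed fraction of the capacity of $D$, I would fold $I$ into $R$ with a single scan that deduplicates by key (keeping the minimum priority) and drops now-unrepresented keys, and then reset $I$; this rebuild costs $\OO{\lambda/B}$ but occurs only once per $\Om{\lambda}$ inserted elements, i.e.\ $\OO{1/B}$ amortized per element, and it restores the one-copy-per-key property so that \bi\ returns in a legal state.

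For \be\ and \flu\ I would run one or two selection-and-partition passes over $F\cup R$ (each a scan) that also discard unrepresented keys and keep for each surviving key only its minimum-priority copy. For \flu: select the $\frac{1}{4}D.x$-th smallest priority in $F\cup R$, partition around it, and move the smaller side into $F$ (these are exactly the $\frac{1}{4}D.x$ smallest-priority represented elements of $D$); if $D$ holds fewer than $\frac{1}{4}D.x$ elements, they all go to $F$. For \be: select the (at most) $c\,D.x$-th smallest priority, move that prefix into the returned array (key-sorting these $\OO{D.x}$ elements if the caller needs it, which is within budget), and delete every remaining copy of each returned key, thereby realising $D'.rep=\rep\backslash X_{\min}\backslash X_{key}$. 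Each of \be\ and \flu\ then touches $\OO{\lambda/B}$ blocks (fewer when $D$ is nearly empty) while processing $\TT{D.x}=\TT{\lambda^{\frac{1}{1+\alpha}}}$ elements, which is $\OO{\lambda/(B\,\lambda^{\frac{1}{1+\alpha}})}=\OO{\lambda^{\M}/B}=\scan{\lambda^{\M}}$ amortized per element.

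The hard part, and the only real obstacle I anticipate, is the amortized bookkeeping. The additive $\OO{1}$ of a small \bi\ batch, the periodic rebuild of $I$ into $R$, and the lazy deduplication must all be charged so that \bi\ stays at $\OO{1/B}$ per element and \be\ and \flu\ stay at $\scan{\lambda^{\M}}$ per element; moreover this must compose with the global potential of the recursive $x$-treap, because the one-copy-per-key property inside a base case is precisely what bounds the number of \emph{ghosts} in the outer analysis. I would do this by augmenting that potential with a term proportional to $|R|/B$ plus a unit per block of $I$, so that every rebuild, cleanup scan and deduplication is prepaid by the \bi s that created the pending work; verifying that this term is non-negative, decreases by enough on each rebuild, and is not increased beyond the allotted cost by \be, \flu, or the parent's calls into the base case is the remaining calculation.
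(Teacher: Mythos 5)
Your high-level plan coincides with the paper's: represent the base case by flat arrays, make \bi\ an append costing $\OO{1/B}$ amortized per element, and implement \be\ and \flu\ by a constant number of passes over the whole structure, charging the $\OO{\lambda/B}$ pass cost to the $\Theta\left(\lambda^{\frac{1}{1+\alpha}}\right)$ elements returned, which yields $\scan{\lambda^{\frac{\alpha}{1+\alpha}}}$ per element. The paper, however, realizes this with far less machinery: a single array split into a front and a rear buffer maintaining only Invariants \ref{inv:type} and \ref{inv:lprio}, with \bi\ appending incoming batches directly to the rear buffer, and \be\ (resp.\ \flu) implemented as one call to \re\ followed by removing (resp.\ copying) the front buffer; the deduplication and the priority partition are exactly what \re\ does, and their amortized cost is already covered by Lemma~\ref{lem:res}, so no new potential terms, no rebuild schedule and no separate staging machinery are needed. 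Your staging array $I$ buys nothing — appending straight to $R$ is already $\OO{1/B}$ amortized per element — and as written it creates two loose ends. First, a correctness slip: your \be\ and \flu\ select only over $F\cup R$, but freshly inserted elements still sitting in $I$ may have the smallest priorities, so either $I$ must be included in the selection or folded into $R$ at the start of every \be\ and \flu\ (this is within the $\OO{\lambda/B}$ budget, but it must be said). Second, the part you explicitly defer — verifying the augmented potential that pays for rebuilds and lazy deduplication, and that it composes with the outer $x$-treap analysis — is precisely the bookkeeping the paper's formulation avoids by reusing \re\ and Lemma~\ref{lem:res}; if you keep your variant, that verification is still owed. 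Neither issue reflects a wrong idea, but in the paper's simpler formulation the lemma follows in a few lines.
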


\begin{proof}
	For a universal constant $c_0>0$ and a constant parameter $c'<c_0^{\frac{1}{\alpha}+1}$, we allocate an array of size $\left(c'\left(\lambda^{\frac{1}{1+\alpha}}\right)\right)^{\frac{\alpha}{1+\alpha}} \leq c_0 M$ and divide it in the middle into a front and a rear buffer that store elements and maintain only Invariants \ref{inv:type} and \ref{inv:lprio}. 
	
	To implement \bi\ on at most $\frac{c'}{2}\lambda^{\frac{1}{1+\alpha}}$ elements, we simply add them to the rear buffer and update the counter. This costs $\OO{\frac{\lambda^{\frac{1}{1+\alpha}}}{B}/\frac{1}{2}\lambda^{\frac{1}{1+\alpha}}} = \OO{\frac{1}{B}}$ I/Os amortized per added element, since we only scan the part of the rear buffer where the elements are being added to. 
	
	To implement \be\ on at most $\frac{c'}{2}\lambda^{\frac{1}{1+\alpha}}$ extracted elements, we \re\ the array (as implemented for Theorem \ref{thm:xtreap}), remove and return all elements in the front buffer, and update the counter. By Lemma \ref{lem:res} (proven  in Subsection \ref{ssec:ana}) this costs $\OO{\frac{\lambda}{B}/\frac{1}{2}\lambda^{\frac{1}{1+\alpha}}} = \OO{\frac{\lambda^{\frac{\alpha}{1+\alpha}}}{B}}$ I/Os amortized per extracted element. 
	
	\flu\ is implemented like \be\ with the difference that the returned elements are not removed from the array. 
\end{proof}

\subsection{Interface operations} 

We proceed to the description of the interface operations supported by an $x$-treap. 

\subsubsection{Inserting elements to an $x$-treap}

\paragraph{Algorithm.} 

Interface operation \bi\ on an $x$-treap $D$ is implemented by means of the recursive subroutine \bi$\left(D,e_1, \ldots, e_{c\cdot |b|}, b\right)$ that also takes as argument a top or middle buffer $b$ of $D$ and inserts $c\cdot |b|$ elements $e_1, \ldots, e_{c\cdot |b|}$ (contained in a temporary array) inside and below $b$ in $D$, for constant $c\in \left(0,\frac{1}{3}\right]$.  For a bottom buffer $b$, a non-recursive subroutine \bi$\left(D, b\right)$ simply executes Step \ref{bi:0} below and discards the temporary array. \bi$\left(D,e_1, \ldots, e_{c\cdot |b|}, b\right)$ is implemented as following:

\begin{enumerate}
	
	\item \label{bi:0} If $D.x > c' \lambda^{\frac{1}{1+\alpha}} + c|b|$:
	
	\subitem \ref{bi:0}.1. 2-way merge into the temporary array, the elements in the temporary array and in the rear buffer of~$b$ (by simultaneous scans in increasing key-order). \re\ $b$ considering the temporary array as the rear buffer of $b$.
	
	\subitem \ref{bi:0}.2. Implicitly partition the front buffer of $b$ and the temporary array by the key ranges of the subtreaps immediately below $b$. Consider the subtreaps in increasing key-order by reading the index of $D$. For every key range (associated with subtreap $D'$) that contains at least $\frac{1}{3}\dx^{\frac{1}{2}}$ elements in either the front buffer of~$b$ or the temporary array: While the key range in the front buffer of $b$ and in the temporary array contains at most $\frac{2}{3}\dx^{\frac{1}{2}}$ elements, do:
	
	\subsubitem \ref{bi:0}.2.1. Find the $\left(\frac{2}{3}\dx^{\frac{1}{2}}\right)$-th smallest priority within the key range in the front buffer of $b$ and in the temporary array (by an external memory order-statistics algorithm \cite{BFPRT73}) and move the elements in the key range with larger priority to a new auxiliary array (by simultaneous scans in increasing key-order).
	
	\subsubitem \ref{bi:0}.2.2. If the counter of $D'$ plus the auxiliary array's size does not exceed the capacity of $D'$: \bi\ the elements in the auxiliary array to the top buffer of $D'$. Discard the auxiliary array. 
	
	\subsubitem \ref{bi:0}.2.3. Else, if there are fewer than the maximum allowed number of subtreaps in the level immediately below $b$: \spl\ $D'$. Let $k$ be the smallest key in the array returned by \spl\ (determined by a constant number of random accesses to the leftmost elements in the returned front/rear array). Move the elements in the auxiliary array with key smaller than $k$ to a new temporary array (by a scan), \bi\ these elements to $D'$ and discard this temporary array. 2-way merge the remaining elements in the auxiliary array into the returned rear array and discard the auxiliary array. \init\ a new subtreap with the elements in the returned array. Discard the returned array. 
	
	\subsubitem \ref{bi:0}.2.4. Else, \fld\ all subtreaps immediately below $b$, which writes them to many returned arrays. 2-way merge into a new temporary array, all elements in $b$ and in all returned arrays (by simultaneous scans in increasing key-order). (When the scan on a subtreap's temporary array is over, determine the subtreap with the key-next elements in the level by reading the index of $D$.) \bi\ the elements in the new temporary array to the buffer $b'$ immediately below $b$. Discard the new temporary array and all returned arrays.
	
	\subitem \ref{bi:0}.3. Discard the temporary array and update the counter of $D$. 
	\item \label{bi:1} Else if $D.x \leq c' \lambda^{\frac{1}{1+\alpha}} + c|b|$: \bi\ the elements to the base case structure.
	
\end{enumerate}

\paragraph{Correctness.} 

Operation \bi\ accommodates the insertion of at most $\frac{1}{3}|b|$ elements by allocating recursively extra space within $D$. Step \ref{bi:0} considers the recursive structure. Specifically, Step \ref{bi:0}.1 allows for moving representative elements from $b$ and inserted elements by resolving $b$ with respect to the temporary array. Step \ref{bi:0}.2 identifies the subtreaps immediately below $b$ (repeatedly in increasing key-order) whose associated key range contains too many keys (stored in $b$ and the temporary array, but not in the considered subtreap) and attempts to move the largest-priority elements within this key range into the subtreap. Step \ref{bi:0}.2.1 identifies the at most  $\frac{1}{3}|b|^{\frac{1}{2}}$ elements to be moved and Step \ref{bi:0}.2.2 recursively inserts them to the subtreap. However if the subtreap is full, Step \ref{bi:0}.2.3 splits it into two subtreaps with enough space. Nonetheless if the level cannot contain a new subtreap, Step \ref{bi:0}.2.4 essentially moves all elements in $b$ and in all subtreaps immediately below $b$, to the buffer $b'$ immediately below $b$. Step \ref{bi:0}.3 accounts for the number of inserted elements and the changes in number of subtreaps. Step \ref{bi:1} allows for recursing down to the base case.

\subsubsection{Extracting minimum-priority elements from an $x$-treap}

\paragraph{Algorithm.} 

Interface operation \be\ on an $x$-treap $D$ is implemented as following:

\begin{enumerate}
	\item \label{be:0} If $D.x > c' \lambda^{\frac{1}{1+\alpha}}$:
	\subitem \ref{be:0}.1 If the front top buffer contains less than $\frac{1}{4} D.x$ elements: \flu~the top buffer.
	\subitem \ref{be:0}.2 Remove and return all the elements $\left(e_i.k, e_i.p\right)$ from the front top buffer.
	\subitem \ref{be:0}.3 Update the counter of $D$.
	\item \label{be:1} Else if $D.x \leq c' \lambda^{\frac{1}{1+\alpha}}$: \be\ the base case structure. 
	
\end{enumerate}

\paragraph{Correctness.} 

Operation \be\ considers only the top buffer of $D$. Step \ref{be:0}.1 ensures that there are enough minimum-priority representative elements in the front top buffer of $D$ to be extracted by Step \ref{be:0}.2. Step \ref{be:0}.3 accounts for the extracted elements and Step \ref{be:1} for the base case. All Invariants are maintained.

\subsection{Analysis}
\label{ssec:ana}

\begin{lemma}\label{lem:card}
	An $x$-treap $D$ has $\OO{\log_{\lambda} D.x}$ levels and occupies $\OO{\dx^{1+\alpha}\log_{\frac{\lambda}{B}} D.x}$ blocks. 
\end{lemma}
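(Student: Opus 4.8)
The plan is to analyze the two claims—number of levels and total space—separately, both by unrolling the $\sqrt{x}$-recursion that defines the $x$-treap. First I would count levels. By definition an $x$-treap with $D.x > c'\lambda^{\frac{1}{1+\alpha}}$ recurses into $\sqrt{D.x}$-treaps, and the recursion bottoms out at the base case once $D.x \leq c'\lambda^{\frac{1}{1+\alpha}}$. Writing $x_0 = D.x$ and $x_{j+1} = \sqrt{x_j}$, we have $x_j = x_0^{2^{-j}}$, so the recursion depth is the least $j$ with $x_0^{2^{-j}} \leq c'\lambda^{\frac{1}{1+\alpha}}$, i.e. $2^{-j}\log x_0 \leq \log(c'\lambda^{\frac{1}{1+\alpha}}) = \Theta(\log\lambda)$, which gives $j = \Theta\!\left(\log\log_\lambda x_0\right)$ levels of recursion. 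But "levels" in the sense of the partial order $\preceq$ must also count the top/middle/bottom buffers and the upper/lower subtreap levels contributed at each stage of the recursion; each recursive stage contributes a constant number ($\OO{1}$) of such levels, so the total number of levels is still $\OO{\log\log_\lambda D.x}\cdot\OO{1}$. Here I should be careful: the statement claims $\OO{\log_\lambda D.x}$ levels, which is weaker (larger) than $\OO{\log\log_\lambda D.x}$, so the bound follows a fortiori; I would just state the cleaner $\OO{\log_\lambda D.x}$ bound as in the lemma, perhaps remarking that it is not tight but suffices for later use.

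Next I would bound the space. Let $S(x)$ denote the number of blocks occupied by an $x$-treap. Non-recursively, the three buffers hold $\OO{x^{1+\alpha}}$ elements, occupying $\OO{x^{1+\alpha}/B}$ blocks, plus the bookkeeping (counter and subtreap index), which is $\OO{x^{\frac{1+\alpha}{2}}}$ words—dominated by the buffer cost. The recursion has at most $\frac14 x^{1/2}$ upper subtreaps and $\frac14 x^{\frac{1+\alpha}{2}}$ lower subtreaps, each a $\sqrt{x}$-treap, so
$$
S(x) \;\leq\; \OO{\tfrac{x^{1+\alpha}}{B}} \;+\; \tfrac14 x^{1/2}\,S(\sqrt{x}) \;+\; \tfrac14 x^{\frac{1+\alpha}{2}}\,S(\sqrt{x}),
$$
with base case $S(x) = \OO{x^{\frac{\alpha}{1+\alpha}}} = \OO{1}$ blocks (a single array, rounded up to one block) when $x \leq c'\lambda^{\frac{1}{1+\alpha}}$. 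The number of lower subtreaps dominates, so the recurrence is essentially $S(x) \leq \OO{x^{1+\alpha}/B} + \OO{x^{\frac{1+\alpha}{2}})}\,S(\sqrt{x})$. I would solve this by observing that the "weight" $x^{1+\alpha}/B$ is (up to the $1/B$ factor) reproduced at every level of the recursion tree: a node at recursion depth $j$ has parameter $x^{2^{-j}}$ and there are $\prod_{i<j}\OO{(x^{2^{-i}})^{\frac{1+\alpha}{2}}} = \OO{x^{\frac{1+\alpha}{2}\sum_{i<j}2^{-i}}}=\OO{x^{(1+\alpha)(1-2^{-j})}}$ such nodes, each contributing $\OO{(x^{2^{-j}})^{1+\alpha}/B}$ blocks, for a per-level total of $\OO{x^{(1+\alpha)(1-2^{-j})}\cdot x^{(1+\alpha)2^{-j}}/B} = \OO{x^{1+\alpha}/B}$ blocks. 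Summing over the $\OO{\log\log_\lambda x}$ recursion levels—and noting the base-case leaves contribute $\OO{x^{1+\alpha}}$ arrays of $\OO{1}$ block each, i.e. $\OO{x^{1+\alpha}}$ blocks total, which is absorbed since $\log_{\frac\lambda B} x = \Omega(1)$—yields $S(x) = \OO{\frac{x^{1+\alpha}}{B}\log\log_\lambda x}$, and since $\log\log_\lambda x \leq \log_{\frac{\lambda}{B}} x$ this is $\OO{\frac{(D.x)^{1+\alpha}}{B}\log_{\frac{\lambda}{B}} D.x}$ as claimed.

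The main obstacle I anticipate is handling the $1/B$ factor correctly across levels: at the bottom of the recursion the $x^{1+\alpha}/B$ term would be fractional (less than one block) were it not for the $+1$ rounding inherent in "number of blocks," so one must argue that the $\OO{x^{1+\alpha}}$ base-case arrays collectively contribute $\OO{x^{1+\alpha}}$ blocks rather than $\OO{x^{1+\alpha}/B}$, and then check that this term does not dominate $\frac{x^{1+\alpha}}{B}\log_{\frac\lambda B} x$. This is exactly where the $\frac{\lambda}{B}$ base (as opposed to $\lambda$) in the space bound comes from, versus the $\lambda$ base in the level count—so I would be careful to track which of $\log_\lambda$ and $\log_{\frac\lambda B}$ appears where, matching the statement. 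A secondary subtlety is that $\log\log_\lambda x \le \log_{\frac{\lambda}{B}} x$ requires $\frac\lambda B \ge 2$, i.e. $\lambda \ge 2B$; for smaller $\lambda$ the recursion depth is still $\OO{\log_{\frac\lambda B} D.x}$ trivially since there is essentially no recursion, so the bound holds in all cases.
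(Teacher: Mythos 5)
The level count is where your proof breaks down. You correctly compute the \emph{recursion depth} as $\Theta(\log\log_\lambda D.x)$, but your claim that ``each recursive stage contributes a constant number of levels'' is false: in the partial order $\preceq$ the upper-level subtreaps lie between the top and middle buffers and the lower-level subtreaps lie between the middle and bottom buffers, so along any top-to-bottom chain you traverse the full level structure of a $\sqrt{x}$-treap \emph{twice}, once above and once below the middle buffer. The number of positions in the $\preceq$ order thus doubles with each recursion stage, and the correct recurrence is $L(x) = 3 + 2L\left(x^{1/2}\right)$ with $L\left(c'\lambda^{\frac{1}{1+\alpha}}\right)=1$ (this is exactly the paper's argument), which with recursion depth $d=\Theta(\log\log_\lambda x)$ solves to $L(x)=\Theta\left(2^{d}\right)=\Theta(\log_\lambda x)$. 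So the lemma's bound is tight, not loose; your assertion that the structure has only $O(\log\log_\lambda D.x)$ levels is wrong, and the ``a fortiori'' step resting on it does not constitute a proof of the first claim even though the claimed inequality happens to hold. This is not cosmetic: the $\Theta(\log_\lambda D.x)$ level count is precisely what the later potential analysis (the $h-i$ factors) and all the I/O bounds are built on, so the misreading would propagate.

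The space half is essentially sound and in fact sharper than required, by a route different from the paper's: you sum, over the $O(\log\log_\lambda x)$ recursion depths, the product of the number of depth-$j$ nodes and their buffer sizes, getting $O\left(\dx^{1+\alpha}\right)$ elements per depth plus block rounding at the base cases, whereas the paper simply multiplies the number of levels by the $O\left(\dx^{1+\alpha}\right)$ total buffer capacity per level, using that \re\ leaves at most one element per key in a buffer. Two small inaccuracies in your version: your closing line states the bound with a $1/B$ factor, but your absorption of the base-case rounding term (``$O\left(\dx^{1+\alpha}\right)$ blocks, absorbed since the logarithm is $\Omega(1)$'') only works against the lemma's stated bound $O\left(\dx^{1+\alpha}\log_{\frac{\lambda}{B}} D.x\right)$ without the $1/B$; and the number of base-case structures is at most the number of deepest-level nodes, roughly $O\left(\dx^{1+\alpha}/\lambda\right)$, not $O\left(\dx^{1+\alpha}\right)$ (your looser count is still a valid upper bound). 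Neither of these affects the lemma as stated; the level-count argument is the genuine gap to fix.
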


\begin{proof}
	We number the levels of the structure sequentially, following the defined ``above/below'' order from top to bottom, where a base case structure counts for one level. Hence, the total number of levels is given by $L\left(D.x\right) = 3 + 2 L\left(\dx^{\frac{1}{2}}\right) $ with $L\left(c\lambda^{\frac{1}{1+\alpha}}\right) = 1$, which solves to the stated bound. Since, \re\ leaves in the operated buffer  at most one element with a given key, the space bound follows.
\end{proof}

\begin{lemma}\label{lem:ra}
	By the tall-cache assumption, scanning the buffers of an $x$-treap $D$ and randomly accessing $\OO{\dx^{\frac{1+\alpha}{2}}}$ subtreaps takes  $\scan{\dx^{1+\alpha}}$ I/Os, for any real $\alpha \in (0,1]$.
\end{lemma}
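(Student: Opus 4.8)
The plan is to bound the I/O cost of two separate activities---sequentially scanning the three (front/rear) buffers of $D$, and touching $\OO{\dx^{\frac{1+\alpha}{2}}}$ of the subtreaps at arbitrary memory locations---and show that their sum is $\scan{\dx^{1+\alpha}}$. First I would observe that the buffers of $D$ (top, middle, bottom, each split into front and rear) have total size $D.x + \dx^{1+\frac\alpha2} + \dx^{1+\alpha} = \TT{\dx^{1+\alpha}}$, and since by Invariant \ref{inv:type} each is stored contiguously and left-justified, scanning all of them costs $\OO{\frac{\dx^{1+\alpha}}{B}+1} = \scan{\dx^{1+\alpha}}$ I/Os, the additive $1$ being absorbed because $\dx^{1+\alpha} \geq \dx > c'\lambda^{\frac{1}{1+\alpha}} \geq c' B$ in the recursive case (one may assume $\lambda = \Om{B}$, or note the statement is only invoked on non-base-case treaps). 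So the scanning part alone is within budget regardless of cache parameters.

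The delicate part is the random accesses. There are at most $\frac14\dx^{\frac12}$ upper subtreaps and at most $\frac14\dx^{\frac{1+\alpha}{2}}$ lower subtreaps, so the total number of subtreaps is $\OO{\dx^{\frac{1+\alpha}{2}}}$, matching the hypothesis. Each random access to a subtreap reads $\OO{1}$ blocks (e.g. its header/counter block, located via the index of $D$), so naively this costs $\OO{\dx^{\frac{1+\alpha}{2}}}$ I/Os. The claim is that this is $\OO{\frac{\dx^{1+\alpha}}{B}}$, i.e. that $\dx^{\frac{1+\alpha}{2}} = \OO{\frac{\dx^{1+\alpha}}{B}}$, equivalently $B = \OO{\dx^{\frac{1+\alpha}{2}}}$. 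This is exactly where the tall-cache assumption enters: I would invoke that we only apply this lemma when $\dx^{\frac{1+\alpha}{2}} = \Om{B}$. Concretely, the recursion bottoms out at $D.x = \TT{\lambda^{\frac{1}{1+\alpha}}}$, so in any recursive $x$-treap $\dx^{\frac{1+\alpha}{2}} = \Om{\lambda^{1/2}}$; provided $\lambda = \Om{B^2}$ this gives $\dx^{\frac{1+\alpha}{2}} = \Om{B}$ and the bound follows. (In the cache-oblivious statement of Theorem \ref{thm:xtreap} the tall-cache assumption $M = \Om{B^2}$ together with the choice of $\lambda$ relative to $M$ is what makes this go through; I would state the needed hypothesis on $\lambda$ explicitly here and note it is satisfied by the parameter regime used throughout.)

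Putting the two pieces together: the scan contributes $\OO{\frac{\dx^{1+\alpha}}{B}}$ and the $\OO{\dx^{\frac{1+\alpha}{2}}}$ random accesses contribute $\OO{\dx^{\frac{1+\alpha}{2}}} = \OO{\frac{\dx^{1+\alpha}}{B}}$ under the tall-cache regime, so the total is $\scan{\dx^{1+\alpha}}$, as claimed. I expect the only real subtlety---the ``main obstacle''---to be pinning down precisely which form of the tall-cache assumption is invoked and confirming it holds at every recursion level (the smallest non-base-case treaps are the tight case, since $\dx^{\frac{1+\alpha}{2}}$ is smallest there), rather than any computation; everything else is a routine size count using the buffer sizes and subtreap counts already fixed in the construction.
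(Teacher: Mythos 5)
Your decomposition into a scan term $\OO{\frac{\dx^{1+\alpha}}{B}}$ plus $\OO{\dx^{\frac{1+\alpha}{2}}}$ random accesses matches the paper, but the way you dispose of the random-access term has a genuine gap. You reduce it to showing $B = \OO{\dx^{\frac{1+\alpha}{2}}}$ and justify this by positing $\lambda = \Om{B^2}$, so that every non-base-case treap satisfies $\dx^{\frac{1+\alpha}{2}} = \Om{\lambda^{1/2}} = \Om{B}$. That extra hypothesis is not part of the lemma and is not available: $\lambda$ is a user-defined parameter ranging over all of $\left[2,N\right]$, chosen obliviously to $B$ and $M$ (in the graph applications it is set to $\OO{E/V}$, and in the cache-oblivious setting no relation between $\lambda$ and $B$ can be assumed). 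For small $\lambda$ and large $B$ the smallest non-base-case treaps violate your inequality, and the argument fails exactly at the point you yourself flagged as the main subtlety.

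The paper closes this case differently, and the tall-cache assumption enters in the opposite direction from how you use it. It argues by cases: either $\dx^{\frac{1+\alpha}{2}} = \OO{\frac{\dx^{1+\alpha}}{B}}$ and the random accesses are dominated by the scan, or else $\dx^{\frac{1+\alpha}{2}} = \Om{\frac{\dx^{1+\alpha}}{B}}$, which forces $\dx^{1+\alpha} = \OO{B^2}$, and then the tall-cache assumption $M \geq B^2$ gives $\dx^{1+\alpha} = \OO{M}$. In that regime the entire treap (capacity less than $2\dx^{1+\alpha}$) fits in main memory, so the random accesses to subtreaps incur no I/Os at all and only the scan term survives. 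In other words, tall-cache relates $M$ to $B$, not $\lambda$ to $B$; your proof becomes correct if you replace the ``$\lambda = \Om{B^2}$'' step with this case distinction. (Your side remark absorbing the additive $\OO{1}$ via $\lambda \geq B$ has the same flavor of unwarranted assumption, though that point is minor and is handled elsewhere in the paper by carrying explicit $+\OO{1}$ terms.)
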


\begin{proof}
	We show that $\OO{\frac{\dx^{1+\alpha}}{B} + \dx^{\frac{1+\alpha}{2}}} = \OO{\frac{\dx^{1+\alpha}}{B}}$. Indeed this holds when $\dx^{\frac{1+\alpha}{2}} = \OO{\frac{\dx^{1+\alpha}}{B}}$. Otherwise $\dx^{\frac{1+\alpha}{2}} = \Om{\frac{\dx^{1+\alpha}}{B}} \Rightarrow \dx^{1+\alpha} = \OO{B^2} $ and by the tall-cache assumption that $M\geq B^2$, we get that	$D.x = \OO{M^{\frac{1}{1+\alpha}}}=\OO{M}$. Hence $D$ fits into main memory and thus randomly accessing its subtreaps incurs no I/Os, meaning that the I/O-complexity is dominated by $\OO{\frac{\dx^{1+\alpha}}{B}}$.
\end{proof}

\subsubsection{Amortization}

A buffer $b_i$ at level $i\leq h := \OO{\log_{\frac{\lambda}{B}} D.x}$ with $b_f$ elements in the front buffer and $b_r$ elements in the rear buffer has potential $\Phi (b_i) := \Phi_{f}(b_i) + \Phi_r(b_i)$, such that (for constants $\varepsilon := \frac{\alpha}{1+\alpha}$ and $c_0\geq 1$):

\begin{itemize}
	\item $\Phi_f (b_i) =\begin{cases}
	0, & \text{if $\frac{1}{4}|b_i|\le b_f \le \frac{1}{3}|b_i|$},\\
	\frac{c_0}{B}  \lambda^{\varepsilon} \cdot  \left(\frac{|b_i|}{4} - b_f\right) \cdot \left( h - i \right), & \text{if $b_f < \frac{1}{4}|b_i|$},\\
	\frac{c_0}{B} \cdot \left( b_f - \frac{|b_i|}{3} \right)  \cdot \left( h - i \right), & \text{if $b_f > \frac{1}{3}|b_i|$},\\
	\end{cases}$
	\item
	$\Phi_r (b_i) =\begin{cases}
	2 \frac{c_0}{B} \cdot \left(b_r - \frac{|b_i|}{2}\right) \cdot \left( h - i \right), & \hspace{11pt}\text{if $b_r > 0$}.
	\end{cases}$
\end{itemize}

In general, a particular element will be added to a rear buffer and will be moved down the levels of the structure over rear buffers by operation \fld. A \re\ operation will move the element from the rear to the front buffer, if it is a representative element. From this point, it will be moved up the levels over front buffers by operation \flu. If it is not representative, it will either get discarded by \re (when there is an element with the same key and with smaller priority in the same buffer) or it will keep going down the structure. Since \re\ leaves only one element per key at the level it operates, $\OO{\log_{\frac{\lambda}{B}} D.x}$ elements with the same key (i.e. at most one per level) will remain in the structure after the extraction of the representative element for this key. 

The $\lambda^\varepsilon$-factor accounts for the extra cost of \flu\ and \be, the $\left(h-i\right)$-factor allows for moving elements up or down a level by \flu\ and \fld\ and the $2$-factor accounts for moving elements from the rear to the front buffer. 

\begin{lemma}\label{lem:res}
	\re\ on a buffer $b_i$ takes $\scan{|b_i|}+\OO{1}$ amortized I/Os.
\end{lemma}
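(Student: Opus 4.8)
The plan is to separate the \emph{actual} I/O cost of \re\ from its effect on the potential $\Phi(b_i)$, bound each, and add them; the actual-cost part is the routine one and the potential part is where the care goes.

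For the actual cost I would walk through Steps \ref{res:0}--\ref{res:4} of the algorithm. Every step is either $\OO{1}$ work (returning early when the front half is empty, updating the counter of $D$, and allocating/releasing the two size-$|b_i|$ scratch arrays) or a constant number of purely sequential passes --- a scan, or a 2-way merge implemented by simultaneous scans --- over arrays of size $\OO{|b_i|}$ (the front and rear halves of $b_i$, which together hold at most $|b_i|$ elements, and the two temporary arrays). Each such pass costs $\scan{|b_i|}=\OO{|b_i|/B}$ I/Os, and being sequential it incurs no hidden cache-oblivious overhead, so no tall-cache assumption is needed here (unlike Lemma \ref{lem:ra}, \re\ does no random access to subtreaps). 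Summing the $\OO{1}$ passes and the $\OO{1}$ bookkeeping gives an actual cost of $\scan{|b_i|}+\OO{1}$.

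For the potential I would argue $\Delta\Phi(b_i)\le\OO{1}$, so that the amortized cost, actual plus $\Delta\Phi$, stays within $\scan{|b_i|}+\OO{1}$. \re\ modifies only buffer $b_i$ (and $D$'s counter) and, by the Correctness discussion, preserves $\rep$, so it is a legal ``invisible'' move; moreover the per-key deduplication and the deletion of non-represented elements can only \emph{decrease} the total number of elements in $b_i$. After \re\ the front half holds exactly the representatives of priority $\le p_{\max}$ and the rear half exactly those of priority $>p_{\max}$, where $p_{\max}$ is the front half's maximum priority \emph{before} the call; hence a rear element migrates to the front only to the extent Invariant \ref{inv:lprio} was violated on entry. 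I would then use the two invocation contexts to rule out any increase of a $\Phi(b_i)$ term: \flu\ calls \re\ in Step \ref{flu:0} before any flushing, when Invariant \ref{inv:lprio} holds, so there is no migration and $b_f$ does not grow; and \bi\ calls it in Step \ref{bi:0}.1 only on the large-priority overflow elements it is pushing down, which land in the rear half. In both cases $b_f$ does not cross $\tfrac13|b_i|$ and $b_r$ does not drop to $0$, no term of $\Phi(b_i)$ increases, and $\Delta\Phi(b_i)$ is at most the $\OO{1}$ absorbed by the additive constant (rescaling $c_0$ if necessary).

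The main obstacle is exactly this boundary bookkeeping in the piecewise potential: the rear term is discontinuous at $b_r=0$ and the front term has a kink at $b_f=\tfrac13|b_i|$, so one must verify \re\ never empties the rear half nor overfills the front half. Should some caller fail to guarantee this outright, the resulting jump in $\Phi(b_i)$ is only a constant multiple of $|b_i|(h-i)/B$; I would organize the write-up so that the $\scan{|b_i|}+\OO{1}$ charge for \re\ itself always goes through as stated, and any such residual jump is accounted against the amortized bound of the calling \flu\ or \bi\ rather than against \re.
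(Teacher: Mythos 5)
Your actual-cost accounting matches the paper's (a constant number of scans and merges over arrays of size $\OO{|b_i|}$, hence $\scan{|b_i|}+\OO{1}$), but your potential argument has a genuine gap. You try to show $\Delta\Phi(b_i)\leq\OO{1}$ by claiming that in the two invocation contexts no (or essentially no) elements migrate from the rear to the front buffer. That claim is false: \re\ exists precisely to restore Invariant~\ref{inv:lprio} when the callers have violated it, and in both contexts rear elements with priority at most $p_{\max}$ do move to the front. In \bi\ Step~\ref{bi:0}.1 the temporary array treated as the rear buffer contains freshly inserted elements, which may carry arbitrarily small priorities (this is exactly the $X_{dec}$ case in the definition of \bi), so after Step~\ref{res:3} they land in the front buffer and $b_f$ can grow by up to $|b_i|/2$, pushing $\Phi_f$ up by $\Theta\left(\frac{|b_i|}{B}(h-i)\right)$. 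Your fallback --- charging any such jump to the calling \flu\ or \bi\ --- does not rescue the lemma as stated: the bound $\scan{|b_i|}+\OO{1}$ is used as an unconditional black box inside the proofs of Lemmata~\ref{lem:bins} and~\ref{lem:bext}, and those proofs do not absorb an extra $\Theta\left(\frac{|b_i|}{B}(h-i)\right)$ term on behalf of \re.

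The paper's proof is self-contained and hinges on an idea you did not use: the asymmetry of the coefficients in the potential (the factor $2$ in $\Phi_r$ versus $1$ in the overflow branch of $\Phi_f$), which the paper explicitly introduces ``to account for moving elements from the rear to the front buffer.'' Since during \re\ elements are only added to the front buffer and only removed from the rear buffer, moving $m$ elements from rear to front increases $\Phi_f$ by at most $\frac{c_0}{B}\,m\,(h-i)$ while decreasing $\Phi_r$ by $2\frac{c_0}{B}\,m\,(h-i)$; in the paper's worst case ($\frac{|b_i|}{6}$ elements moved from a full rear buffer to a front buffer already holding $\frac{|b_i|}{3}$ elements) this gives $\Delta\Phi(b_i)\leq 0$ outright, with no case analysis of who called \re. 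So the migration you tried to argue away is exactly the event the potential was designed to pay for, and without invoking that factor-$2$ trade-off your argument does not establish the amortized bound.
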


\begin{proof}
	All steps of operation \re\ are implemented by a constant number of scans over buffers of size at most $|b_i|$. 
	Since elements can only be added to the front buffer and only be removed from the rear buffer, the maximum difference in potential occurs, when $\frac{|b_i|}{2} - \frac{|b_i|}{3} = \frac{|b_i|}{6}$ elements are moved from a full rear buffer to a front buffer that contains $\frac{|b_i|}{3}$ elements. We have that:
	
	\[\Delta \Phi (b_i)\leq \Delta \Phi_f(b_i) + \Delta \Phi_r(b_i) \leq 
	\frac{c_0}{B} \cdot \frac{|b_i|}{6} \cdot i 
	- 2 \frac{c_0}{B} \cdot \frac{|b_i|}{6} \cdot i \leq  - \frac{c_0}{B} \cdot \frac{|b_i|}{6} \cdot i \leq 0, 
	\]
	\noindent for $|b_i|,i\geq 0$. 
\end{proof}

\begin{lemma}\label{lem:bins}
	\bi\ on an $x$-treap $D$ takes $\OO{\frac{1+\alpha}{B}\log_{\frac{\lambda}{B}} D.x}$ amortized I/Os per element, for some $\lambda \in \left[2, N \right]$ and for any real $\alpha\in (0,1]$. 
\end{lemma}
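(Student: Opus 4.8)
The plan is to bound the amortized cost of \bi\ by a potential argument using the function $\Phi$ defined above, recursing on the tree of subtreaps. First I would set up the recursion: the interface call \bi$(D,e_1,\ldots,e_b)$ fans out through the recursive subroutine \bi$(D,\cdot,b)$ which is invoked on the top buffer, then possibly on the middle buffer, and then (through Steps \ref{bi:0}.2.2--\ref{bi:0}.2.4) on the top buffers of subtreaps one level down. I would argue that each such invocation, when it recurses into a subtreap via Step \ref{bi:0}.2.2, passes down at most $\frac{2}{3}\dx^{1/2} \le \frac{1}{3}\sqrt{x}$ elements into an $x'$-treap with $x' = \sqrt{x}$, so the ``$c \le 1/3$'' precondition is respected at every level, and the number of levels through which a given batch percolates is $\OO{\log_\lambda D.x} = \OO{\log_{\lambda/B} D.x}$ by Lemma \ref{lem:card} (the base of the logarithm changes by only a constant factor since $\lambda \ge 2$ and, when $B$ is large relative to $\lambda$, the whole structure is small).

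Next I would account for the actual I/O work done at a single node (one invocation of the recursive subroutine on a buffer $b$), excluding the recursive calls it makes. By Lemma \ref{lem:ra} every scan of the buffers of $D$ together with the $\OO{\dx^{(1+\alpha)/2}}$ random accesses to subtreaps costs $\scan{\dx^{1+\alpha}}$ I/Os; Step \ref{bi:0}.1 is a \re\ which by Lemma \ref{lem:res} costs $\scan{|b|}+\OO{1}$ amortized; the order-statistics selections of Step \ref{bi:0}.2.1, the \spl\ of Step \ref{bi:0}.2.3, and the \fld\ of Step \ref{bi:0}.2.4 are all scans of buffers of $D$ or of the subtreaps immediately below, hence also $\OO{\scan{\dx^{1+\alpha}}}$. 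Crucially Step \ref{bi:0}.2's inner ``while'' loop touches a given key range only while it holds between $\frac{1}{3}\dx^{1/2}$ and $\frac{2}{3}\dx^{1/2}$ elements, so across all key ranges the total work charged at this node is a constant number of scans of $D$'s buffers, i.e. $\OO{\scan{\dx^{1+\alpha}}}$, plus the cost of the \spl / \fld / \init\ events. Those restructuring events are each charged $\OO{\scan{\dx^{1+\alpha}}}$ but they are triggered only when a buffer or a level is a constant fraction full, so I would pay for them out of the drop in $\Phi$ (the $\Phi_f$-overflow term for $b_f > \frac{1}{3}|b_i|$ and the $\Phi_r$ term release exactly $\Om{\frac{c_0}{B}|b|(h-i)}$ units, dominating the $\OO{|b|/B}$ true cost since $c_0 \ge 1$), and any potential \emph{increase} caused by adding inserted elements to a front buffer is bounded, per inserted element, by $\frac{c_0}{B}(h-i) = \OO{\frac{1}{B}\log_{\lambda/B} D.x}$ — and this is the dominant term.

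Then I would combine: a single inserted element, over its participation in one top-level \bi, is moved from buffer to buffer $\OO{\log_{\lambda/B} D.x}$ times (once per level, going down), each move costing $\OO{1/B}$ in true I/Os (amortized, after sorting/scanning is absorbed into the per-level scan whose cost is shared among the $\Om{\dx^{1/2}}$ elements moved together — here the precondition $b \le cD.x$ on each recursive call guarantees that each scan of $\dx^{1+\alpha}$-size buffers is amortized against $\Om{D.x}$ resident elements, hence $\OO{\dx^\alpha/B}$ per element, and one checks $\dx^\alpha \cdot \#\text{levels}$ telescopes since $x$ shrinks doubly-exponentially) and inducing a potential rise of $\OO{\frac{1}{B}(h-i)}$. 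Summing $\frac{1}{B}(h-i)$ over levels $i$ and noting $h = \OO{\log_{\lambda/B} D.x}$ gives $\OO{\frac{1}{B}\log^2_{\lambda/B} D.x}$ naively, but the refined bound of the lemma is $\OO{\frac{1+\alpha}{B}\log_{\lambda/B}D.x}$; to get it I would charge the rise at level $i$ not to the element but observe that an element inserted into a front buffer at level $i$ contributes to $\Phi_f$ a term that telescopes across the \emph{downward} percolation, so the net potential deposited per element over its whole descent is $\OO{\frac{1+\alpha}{B}\log_{\lambda/B}D.x}$ (the $(1+\alpha)$ factor is the constant from the $x^{1+\alpha}/x^{1+\alpha/2}/x$ buffer-size ratios that controls how recursion costs at the three buffers of one $x$-treap add up, via the recurrence $T(x) \le 3\cdot\OO{\frac{1+\alpha}{B}} + 2T(\sqrt x)$ with base case $\OO{1/B}$ from Lemma \ref{lem:array}).

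The main obstacle I expect is the bookkeeping for Step \ref{bi:0}.2.4 (the level-overflow case, where a whole level's worth of subtreaps is flushed into the buffer below): this is the expensive event, and one must verify both that it is rare enough — it happens only when the level below is full, i.e. holds $\Om{\dx^{(1+\alpha)/2}}$ subtreaps each with $\Om{\lambda^{1/(1+\alpha)}}$ elements, so a constant fraction of $D$'s capacity is being moved down one level — and that its $\OO{\scan{\dx^{1+\alpha}}}$ cost is covered by the release of the $\Phi_r$/overflow potential of the emptied buffers together with the standard ``$x$-box'' charging that a level-overflow at parameter $x$ can be amortized against $\Om{x^{1+\alpha}}$ insertions having passed through. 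Getting the constants to line up (the choice of $c$, $c'$, $c_0$, and the thresholds $\frac14,\frac13,\frac23$) so that every $\Delta\Phi \le -(\text{true cost})$ except for the declared $\OO{\frac{1+\alpha}{B}\log_{\lambda/B}D.x}$ per element, and making sure the recursion's base case (Lemma \ref{lem:array}, contributing $\OO{1/B}$ per element) attaches cleanly, is the delicate part; the rest is routine scan-counting justified by Lemmas \ref{lem:card}, \ref{lem:ra}, and \ref{lem:res}.
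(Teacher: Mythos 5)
Your overall architecture coincides with the paper's: bound the non-recursive work of each invocation by scans (Lemmata \ref{lem:ra} and \ref{lem:res}), give every newly inserted element a one-time deposit of $\OO{\frac{1}{B}\log_{\frac{\lambda}{B}}D.x}$ potential, show that each restructuring case (Steps \ref{bi:0}.2.2--\ref{bi:0}.2.4, with \spl/\fld/\init) has non-positive potential change because it fires only when a buffer or level is a constant fraction full, and attach the base case via Lemma \ref{lem:array}. So there is no genuinely different route here; the issue is that one of your accounting steps would fail as written. You justify the ``$\OO{1/B}$ per move'' by charging a scan of the $\dx^{1+\alpha}$-size buffers to the $\Om{D.x}$ elements of the batch, getting ``$\OO{\dx^{\alpha}/B}$ per element'', and then claim that $\dx^{\alpha}\cdot(\text{number of levels})$ ``telescopes since $x$ shrinks doubly-exponentially''. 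It does not: the sum $x^{\alpha}+x^{\alpha/2}+x^{\alpha/4}+\cdots$ is $\Theta\left(x^{\alpha}\right)$, which is polynomially larger than the target $\frac{1+\alpha}{B}\log_{\frac{\lambda}{B}}D.x$. The valid mechanism --- which you also state and should make the sole mechanism, as the paper does --- is that a buffer of size $s$ is scanned only when $\Theta\left(s\right)$ elements cross it (this is exactly what the constant-fraction thresholds enforce), so each element pays $\OO{1/B}$ per level crossing, and this is precisely what the $\left(h-i\right)$ factors in $\Phi_f,\Phi_r$ deliver: moving a batch down one level releases $\Om{\frac{c_0}{B}\cdot|b_{i+1}|}$, a constant fraction of the scan cost at that level.

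Two smaller points. First, your claimed inequality ``$\frac{2}{3}\dx^{1/2}\le\frac{1}{3}\sqrt{x}$'' is false as written; what you need (and what Step \ref{bi:0}.2.1's selection around the $\left(\frac{2}{3}\dx^{\frac{1}{2}}\right)$-th smallest priority provides) is that each recursive \bi\ receives at most a $\frac{1}{3}$-fraction of the receiving subtreap's $x$-parameter. Second, your detour through a ``naive $\OO{\frac{1}{B}\log^2}$'' bound and a telescoping fix is unnecessary: as in the paper, the $\OO{\frac{h}{B}}$ deposit is made once at insertion and the element's contribution to the potential only decreases as it descends, so no per-level charge accumulates. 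Finally, the paper also explicitly charges each subtreap created by \init\ in the split case (Step \ref{bi:0}.2.3) an initial $\OO{\frac{\dx^{1+\alpha}}{B}\log_{\frac{\lambda}{B}}D.x}$ of potential, amortized over the $\Om{\dx^{1+\alpha}}$ elements inserted into it; you invoke this kind of charging only for the level-overflow case (Step \ref{bi:0}.2.4), and it is needed for the split case as well.
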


\begin{proof}
	Excluding all recursive calls (to \bi\ at Steps 1.2.2, 1.2.3 and 1.2.4 and to \init\ at Step 1.2.3), the worst-case cost of \bi\ on a buffer $b_i$ is $\scan{|b_i|^{1+\alpha}}+\OO{1}$ I/Os by Lemmata \ref{lem:ra} and \ref{lem:res}, by \cite{BFPRT73} and because this is also the worst-case I/O-cost of \fld\ and \init. The base case (Step 2) charges an extra $\OO{\frac{1}{B}}$ I/Os per element by Lemma \ref{lem:array}.
	
	In Step 1.2.2, at most $\frac{1}{3} |b_{i+1}|$ elements are moved from a buffer $b_i$ at level $i$ to a buffer $b_{i+1}$ at level $i+1$, where $|b_{i}| = |b_{i+1}|^{1+\frac{\alpha}{2}}$. The maximum difference in potential occurs when all these elements are moved from a full rear buffer of $b_i$ to a rear buffer of $b_{i+1}$. We have that:
	\[
	\sum_{j=i}^{i+1} \Delta \Phi_r (b_j) \leq 
	- 2 \frac{c_0}{B} \frac{|b_{i+1}|}{3} \left( h-i \right)
	+ 2 \frac{c_0}{B} \frac{|b_{i+1}|}{3} \left( h-i-1 \right)  \leq 
	- \frac{2}{3} \frac{c_0}{B} |b_{i+1}| \leq 0
	\]
	\noindent for $|b_{i+1}|\geq 0$.
	If the same scenario occurs between front buffers, we have that:
	\[
	\sum_{j=i}^{i+1} \Delta \Phi_f (b_j) \leq 
	- \frac{1}{3} \frac{c_0}{B} |b_{i+1}| \leq 0
	\]
	\noindent for $|b_{i+1}|\geq 0$.
	Hence $\sum_{j=i}^{i+1} \Delta \Phi (b_j)\leq 0$, given every newly inserted element is charged with an  $\OO{h}=\OO{\log_{\frac{\lambda}{B}}D.x}$ initial amount of potential.

	In Step 1.2.3, operation \spl\ removes at most $\frac{1}{2}\dx^{1+\alpha}$ from a subtreap $D$, whose bottom buffer $b_i$ we assume to be at level $i$. Operation \init\ will add these elements to a new subtreap, whose bottom buffer is also at level $i$. Without loss of generality, we focus only on these bottom buffers, since $D$ is more than half-full when \spl\ is called on it and since the bottom buffer's  size is constant fraction of the subtreap's total size. The maximum difference in potential occurs when  $\frac{1}{4}|b_{i}|$ elements are removed from a full bottom front/rear buffer of $b_i$ and added to an empty bottom front/rear buffer, respectively. We have that:
	\[
	\Delta \Phi_r (b_i) \leq 
	-  2 \frac{c_0}{B} \frac{|b_{i}|}{2} \left( h-i \right)   
	+ 2 \frac{c_0}{B} \frac{|b_{i}|}{2}  \left( h-i \right) =0
	\]
	\[
	\Delta \Phi_f (b_i)  \leq 
	-  \frac{c_0}{B} \frac{|b_{i}|}{6} \left( h-i \right)   
	-  \frac{c_0}{B}M^\varepsilon \frac{|b_{i}|}{6} \left( h-i \right)   
	\leq 0
	\]
	\noindent for $|b_{i}|\geq 0$. 
	Hence $\Delta \Phi (b_i)\leq 0$, given that subtreaps created by \init\ are charged with an $\OO{\dx^{1+\alpha} \cdot h}=\OO{\dx^{1+\alpha}\log_{\frac{\lambda}{B}}D.x}$ initial amount of potential. This extra charge does not exceed by more than a constant factor, the initial potential charged to every newly inserted element, and is amortized over the $\OO{\dx^{1+\alpha}}$ elements that are inserted to the created subtreap.
	
	In Step 1.2.4, operation \fld\ removes elements from a bottom buffer $b_i$ at level $i$ and inserts them to a middle or bottom buffer $b_{i+1}$ at level $i+1$, where $|b_{i+1}| = |b_{i}|^{1+\frac{\alpha}{2}}$. The maximum difference in potential occurs when $\frac{1}{2}|b_{i}|$ elements are removed from a full bottom rear buffer of $b_i$ and added to a rear buffer at level $i+1$. We have that:
	\[
	\sum_{j=i}^{i+1} \Delta \Phi_r (b_j) \leq 
	- 2 \frac{c_0}{B} \frac{|b_{i}|}{2} \left( h-i \right)   
	+ 2 \frac{c_0}{B} \frac{|b_{i}|}{2}  \left( h-i -1 \right) \leq
	- \frac{c_0}{B} |b_{i}| \leq 0
	\]
	\noindent for $|b_{i+1}|\geq 0$. For the case where Step \ref{fld:1} is executed, in the worst case at most $\frac{1}{6}|b_{i}|$ elements are removed from a full bottom front buffer of $b_i$. By Invariants \ref{inv:lprio} and \ref{inv:prio}, operation \re\ at Step \ref{bi:0}.1 of the recursive call to \bi\ will move these elements to the middle/bottom front buffer at level $i+1$. We have that:
	\[
	\sum_{j=i}^{i+1} \Delta \Phi_f (b_j) \leq 
	- \frac{1}{6}\frac{c_0}{B} |b_{i}| \leq 0
	\]
	\noindent for $|b_{i}|\geq 0$.
	Hence $\sum_{j=i}^{i+1} \Delta \Phi (b_j)\leq 0$.
\end{proof}

\begin{lemma}\label{lem:bext}
	\be\ on an $x$-treap $D$ takes $\OO{\lambda^{\frac{\alpha}{1+\alpha}}\frac{1+\alpha}{B}\log_{\frac{\lambda}{B}} D.x}$ amortized I/Os per element, for some $\lambda \in \left[2, N \right]$ and for any real $\alpha\in (0,1]$. 
\end{lemma}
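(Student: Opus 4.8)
The plan is to mirror the structure of the proof of Lemma~\ref{lem:bins}, but now account for the extra $\lambda^{\varepsilon}$ factor that $\flu$ and the base case impose on extractions. First I would recall that, by the algorithm for $\be$, a call on an $x$-treap $D$ with $D.x > c'\lambda^{\frac{1}{1+\alpha}}$ only touches the top buffer of $D$: it possibly calls $\flu$ on the top buffer (Step~\ref{be:0}.1), then removes the front top buffer's elements (Step~\ref{be:0}.2), and updates counters. So the whole cost of $\be$ is dominated by a single $\flu$ on the top buffer, plus the trivial scan of a buffer of size $D.x$, plus (in the base case, Step~\ref{be:1}) the $\scan{\lambda^{\frac{\alpha}{1+\alpha}}}$ per-element cost from Lemma~\ref{lem:array}. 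Thus the lemma reduces to bounding the amortized cost of $\flu(D)$, i.e.\ of the recursive subroutine $\flu(D,b)$ summed over all the top/middle buffers $b$ it is invoked on.

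Next I would analyze one call $\flu(D,b)$ on a buffer $b_i$ at level $i$, excluding the recursive $\flu$ calls on subtreaps' top buffers (Steps~\ref{flu:0},~\ref{flu:1}.2) and on the buffer immediately below (Steps~\ref{flu:1}.3,~\ref{flu:5}). Its worst-case non-recursive cost is $\scan{|b_i|^{1+\alpha}} + \OO{1}$ I/Os: the scans, merges and order-statistics are over buffers of size $\OO{|b_i|^{1+\alpha}}$ (using Lemma~\ref{lem:ra} to absorb the random accesses to the $\OO{|b_i|^{\frac{1+\alpha}{2}}}$ subtreaps), the $\re$ at Step~\ref{flu:0} costs $\scan{|b_i|}+\OO{1}$ by Lemma~\ref{lem:res}, and the temporary priority queue $Q$ handles $\OO{|b_i|^{1+\alpha}}$ elements, each $\ins$/$\extmin$ costing $\OO{\frac{1}{B}\log_{\MM}\frac{|b_i|^{1+\alpha}}{B}}$ by \cite{ABDHM07}, which is subsumed by $\scan{|b_i|^{1+\alpha}}\cdot\OO{\log_{\frac{\lambda}{B}} D.x}$ (or one can simply note it contributes at most a $\log_{\frac{\lambda}{B}} D.x$ factor that is already present in the target bound). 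The key amortization step is then the potential change when $\flu$ moves (at most) $\frac{1}{4}|b_i|$ elements \emph{up} from a front buffer $b_{i+1}$ at level $i+1$ (or from subtreaps whose top buffers sit between levels $i$ and $i+1$) into the front buffer of $b_i$, where $|b_i| = |b_{i+1}|^{1+\frac{\alpha}{2}}$ on the buffer boundaries. The worst case is emptying a full front buffer at level $i+1$ into an under-full front buffer at level $i$: using the $\Phi_f$ definition, this releases $\Theta\!\left(\frac{c_0}{B}\lambda^{\varepsilon}|b_{i+1}|(h-i)\right)$ potential at level $i+1$ while adding $\OO{\frac{c_0}{B}\lambda^{\varepsilon}|b_{i+1}|(h-i-1)}$ at level $i$, so $\sum_{j=i}^{i+1}\Delta\Phi_f(b_j) \le -\Theta\!\left(\frac{c_0}{B}\lambda^{\varepsilon}|b_{i+1}|\right) \le 0$, which pays for the $\scan{|b_i|^{1+\alpha}}$ work amortized over the $\Omega(|b_i|^{1+\alpha})$ elements that participated (giving $\OO{\frac{\lambda^{\varepsilon}}{B}}$ per element), provided the constant $c_0$ is chosen large enough. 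The same sign-negative computation covers Step~\ref{flu:5} ($\init$ of a new subtreap from $b'$, charging the new subtreap an $\OO{(b')^{1+\alpha}\cdot h}$ initial potential as in Lemma~\ref{lem:bins}) and Step~\ref{flu:1}.3 (order-statistics split of $b'$, analogous to $\flu$ between a bottom buffer and the level above).

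Then I would sum over levels. By Lemma~\ref{lem:card}, $D$ has $h = \OO{\log_{\lambda} D.x} = \OO{(1+\alpha)\log_{\frac{\lambda}{B}} D.x}$ levels (the $(1+\alpha)$ coming from converting $\log_\lambda$ to $\log_{\frac{\lambda}{B}}$, since $\log_{\lambda}x = \frac{\log(\lambda/B)}{\log\lambda}\log_{\frac{\lambda}{B}}x$ and $B\le\lambda$), and $\flu(D)$ recurses along one root-to-leaf path of buffers, invoking $\flu(D,b)$ once per level. Each invocation's non-recursive I/O cost, amortized per element it extracts/moves, is $\OO{\frac{\lambda^{\varepsilon}}{B}}$ after the potential release just described, and each extracted element passes through $\OO{h}$ levels on its way up, so the total amortized cost per element returned by $\be$ is $\OO{\frac{\lambda^{\varepsilon}}{B}\cdot h} = \OO{\lambda^{\frac{\alpha}{1+\alpha}}\frac{1+\alpha}{B}\log_{\frac{\lambda}{B}} D.x}$, recalling $\varepsilon = \frac{\alpha}{1+\alpha}$. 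Finally, the base case contributes $\scan{\lambda^{\frac{\alpha}{1+\alpha}}} = \OO{\frac{\lambda^{\varepsilon}}{B}}$ per element by Lemma~\ref{lem:array}, which is absorbed into the same bound, and the initial $\OO{h}$-per-element potential that $\bi$ already deposited (Lemma~\ref{lem:bins}) is what funds these $\flu$-induced releases — so no new charging scheme is needed beyond what the potential function in Subsection~\ref{ssec:ana} already provides.

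The main obstacle I anticipate is the bookkeeping for the temporary priority queue $Q$ inside $\flu$: one must verify that the $\OO{\frac{1}{B}\log_{\MM}\cdot}$ per-operation cost of the auxiliary priority queue of \cite{ABDHM07}, applied to $\OO{|b_i|^{1+\alpha}}$ elements, does not blow up the bound — it is fine because it only adds a $\log_{\frac{\lambda}{B}} D.x$ factor to a $\scan{|b_i|^{1+\alpha}}$ term that is then amortized over $\Omega(|b_i|^{1+\alpha})$ elements and summed over $\OO{h}$ levels, but stating this cleanly requires care that $\log_{\MM}\frac{|b_i|^{1+\alpha}}{B} = \OO{\log_{\frac{\lambda}{B}} D.x}$, which is where the condition $\lambda \ge 2$ (so the base is at least a constant) and $\lambda = \OO{M}$ matter; alternatively one observes $Q$ never holds more than $\OO{|b_i|^{1+\alpha}}$ elements and the tall-cache argument of Lemma~\ref{lem:ra} makes operations on it cost $\scan{\cdot}$ whenever $Q$ fits in memory, and otherwise the $\log$ factor is dominated. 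The other delicate point is making sure the ``up'' moves release potential even when elements come from subtreap top buffers that logically lie \emph{between} two of our numbered levels rather than exactly at level $i+1$; this is handled by the partial order $\preceq$ and Invariant~\ref{inv:prio}, which guarantee those subtreap buffers have size $\Theta(|b_i|)$ and sit at a level $> i$, so the $(h-\cdot)$ factor still strictly decreases and the computation goes through with the same constants.
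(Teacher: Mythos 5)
Your overall strategy (cost of \be\ dominated by one \flu\ at the top, recursive calls paid for by a release of front-buffer potential, base case via Lemma~\ref{lem:array}) mirrors the paper, but the central accounting has genuine quantitative gaps. First, you bound the non-recursive work of one call $\flu(D,b)$ by $\scan{|b_i|^{1+\alpha}}$, with the temporary priority queue handling $\OO{|b_i|^{1+\alpha}}$ elements. That is the figure from the \bi\ analysis, not the right one here: a single $\flu(D,b)$ touches only $b$ itself, a temporary array of size $|b_i|$, and the top buffers of the subtreaps immediately below $b$, which total $\OO{|b_i|}$ elements, so $Q$ processes $\OO{|b_i|}$ insertions/extractions and the non-recursive cost is $\OO{1+\frac{|b_i|}{B}\log_{\frac{\lambda}{B}}|b_i|}$ (this is what the paper uses, and it is then amortized over the $\Theta(|b_i|)$ elements returned by \be). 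With your figure the argument does not close: the potential released by moving at most $\frac{1}{4}|b_i|$ elements up one level is only $\OO{\frac{\lambda^{\varepsilon}}{B}|b_i|}$, which cannot pay for $\frac{|b_i|^{1+\alpha}}{B}$ once $|b_i|\gg\lambda^{\frac{1}{1+\alpha}}$, and ``amortizing over the $\Omega(|b_i|^{1+\alpha})$ elements that participated'' is not licensed by the potential function --- \be\ returns only $\Theta(D.x)$ elements, and elements that are merely scanned carry no charge in the scheme of Subsection~\ref{ssec:ana}.

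Second, your closing claim that ``the initial $\OO{h}$-per-element potential that \bi\ deposited is what funds these \flu-induced releases, so no new charging scheme is needed'' misattributes where the $\lambda^{\varepsilon}$ comes from. The \bi\ deposit is $\OO{h/B}$ per element with no $\lambda^{\varepsilon}$ factor; it pays for downward movement. The $\lambda^{\varepsilon}$-weighted potential lives only in the \emph{underflow} case of $\Phi_f$, and it is created precisely when \be\ (and the recursive \flu s) remove elements from front buffers, leaving them below a quarter full; charging that positive $\Delta\Phi_f$ --- about $\frac{c_0}{B}\lambda^{\varepsilon}h$ per extracted element --- to \be\ is exactly the dominant term of the lemma (the paper's ``Steps \ref{be:0}.2 and \ref{be:1} cost extra due to the potential's definition''). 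Relatedly, your worst case ``emptying a full front buffer at level $i+1$ releases $\Theta(\frac{c_0}{B}\lambda^{\varepsilon}|b_{i+1}|(h-i))$'' misreads $\Phi_f$: an overfull front buffer carries the un-weighted overflow potential, so no $\lambda^{\varepsilon}$ is released there; the correct release comes from comparing deficit potentials at the two levels (factors $h-i$ versus $h-i+2$), as in the paper, and the intermediate subtreap level is handled there by assuming (WLOG, at an extra $\OO{\frac{\lambda^{\varepsilon}}{B}}$ per element by Lemma~\ref{lem:array}) that it consists of base-case structures, rather than by your claim that those top buffers individually have size $\Theta(|b_i|)$, which is not true.
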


\begin{proof}
	The cost of \be\ is dominated by the call to \flu\ on a buffer $b_i$, where $|b_i| = \OO{D.x}$. Steps \ref{be:0}.2 (extraction) and \ref{be:1} (base case) cost an extra $\OO{\lambda^{\frac{1}{1+\alpha}}/B}$ amortized I/Os per element, due to the potential's definition and by Lemma \ref{lem:array}, respectively. 
	
	Excluding all recursive calls (to \flu\ at Steps 1,  2.2 and 2.3 and to \init\ at Step 6), the worst-case cost of \flu\ on a buffer $b_i$ is $\OO{1 + \frac{|b_i|}{B}\log_{\frac{\lambda}{B}} |b_i|}$ I/Os. Specifically, \flu\ executes $\OO{|b_i|}$ \ins s and \extmin s to the temporary priority queue (that does not support \deckey) and a constant number of scans and merges. The priority queue operations \cite{ABDHM07} and Step 3 take $\OO{\frac{|b_i|}{B}\log_{\frac{\lambda}{B}} |b_i|}$ I/Os (the structure can be easily modified to achieve this bound, rather than $\sort{|b_i|}$ I/Os). This dominates the cost of incurred random accesses and of calls to \re\ (Lemmata \ref{lem:ra} and  \ref{lem:res}, respectively). The I/O-cost is amortized over the $\Theta \left(|b_i|\right)$ elements returned by \be, resulting in $\OO{\frac{1}{B}\log_{\frac{\lambda}{B}} |b_i|}$ amortized I/Os per element.
	
	To prove the negative cost of the recursive calls (Steps 1 and 2.3), it suffices to argue that there is a release in potential when buffers in consecutive levels are being processed. Without loss of generality, we assume that elements are moved from the middle front buffer at level $i$ to the top front buffer at level $i-2$ ($|b_{i}| = |b_{i-2}|^{1+\frac{\alpha}{2}}$), where the upper level subtreaps at level $i-1$ are base case structures (hence they do not affect the potential). This assumption charges an extra $\OO{\frac{\lambda^{\varepsilon}}{B}}$ I/Os per element by Lemma \ref{lem:array}. The case between the bottom and middle buffers is analogous. The maximum difference in potential occurs when $\frac{|b_{i-2}|}{4}$ elements are removed from a front middle buffer at level $i$ with less than $\frac{|b_{i}|}{4}$ elements, and added to an empty front top buffer at level $i-2$. Since the rear buffers do not change, we have that:
	\[
	\sum_{j=i-2}^{i} \Delta \Phi(b_j) \leq \sum_{j=i-2}^{i} \left(\Delta \Phi_f(b_j) +  \Delta \Phi_r(b_j)\right)\leq   \Delta \Phi_f(b_{i-2}) + \Delta \Phi_f(b_{i})\leq
	\]
	\[
	- \frac{c_0}{B} \lambda^\varepsilon   \frac{|b_i|}{4}  \left(h-i + 2\right)
	+ \frac{c_0}{B} \lambda^\varepsilon \frac{|b_i|}{4}  \left(h-i\right) \leq - \frac{c_0}{B} M^\varepsilon   \frac{|b_i|}{2} \leq 0 ,
	\]
	\noindent for $|b_i| \geq 0$.
\end{proof}

\section{Cache-oblivious priority queues}


Priority queues support operations \update\ and \extmin\ that are defined similarly to \bi\ and \be, respectively, but on a \emph{single} element. 

\subsection{Data structure}

To support these operations, we compose a priority queue out of its batched counterpart in Theorem \ref{thm:xtreap}. The data structure on $N$ elements consists of $1+\log_{1+\alpha}\log_2 N $ $x$-treaps of doubly increasing size with parameter $\alpha$ being set the same in all of them. Specifically, for $i\in \{0,\log_{1+\alpha}\log_2 N\}$, the $i$-th $x$-treap $D_i$ has $D_i.x = 2^{\left(1+\alpha \right)^i}$. We store all keys returned by \extmin\ in a hash table $X$ \cite{IP12,CFS18}. (Any I/O-efficient hash table with $\OO{1}$ query and update I/Os is also cache-oblivious). 

For $i\in \{0,\log_{1+\alpha}\log_2 N - 1\}$, we define the top buffer of $D_i$ to be ``below'' the bottom buffer of $D_{i-1}$ and the bottom buffer of $D_i$ to be ``above'' the top buffer of $D_{i+1}$. We define the set of represented pairs (key, priority) $rep = \bigcup_{i=0}^{\log_{1+\alpha}\log_2 N} D_i.rep \backslash \{(k,p)|k\in X\}$ and call \emph{represented} the keys and priorities in $rep$. We maintain the invariant that the maximum represented priority in $D_i.rep$ is smaller than the smallest represented priority below. 

\subsection{Algorithms}

To implement \update\ on a pair (key,priority) $\in rep$, we \bi\ the corresponding element to $D_0$. $D_0$ handles single-element batches, since for $i=0 \Rightarrow x=\Theta\left(1\right)$. When $D_i$ reaches capacity (i.e. contains $\left(D_i.x\right)^{1+\alpha}$ elements), we call \fld~on it, \bi\ the elements in the returned temporary array to $D_{i+1}$ and discard the array. This process terminates at the first $x$-treap that can accomodate these elements without reaching capacity.

To implement \extmin, we call \be\ to the first $x$-treap $D_i$ with a positive counter, add the extracted elements to the (empty) bottom front buffer of $D_{i-1}$ and repeat this process on $D_{i-1}$, until $D_0$ returns at least one element. If the returned key does not belong to $X$, we insert it. Else, we discard the element and repeat \extmin.

To implement \delete\ of a key, we add the key to $X$.

\begin{theorem}\label{thm:pq}
	There exist cache-oblivious priority queues on $N$ elements that support operation \update\ in $\OO{\frac{1}{B}\log_{\frac{\lambda}{B}} \frac{N}{B}}$ amortized I/Os per element and operations \extmin\ and \delete\ in $\OO{\lceil \frac{\lambda^{\frac{\alpha}{1+\alpha}}}{B}  \log_{\frac{\lambda}{B}} \frac{N}{B} \rceil \log_{\frac{\lambda}{B}} \frac{N}{B} }$ amortized I/Os per element, using $\OO{\frac{N}{B}\log_{\frac{\lambda}{B}} \frac{N}{B}}$ blocks, for some $\lambda \in \left[2, N \right]$ and for any real $\alpha\in (0,1]$. 
\end{theorem}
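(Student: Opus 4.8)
The plan is to derive Theorem~\ref{thm:pq} directly from Theorem~\ref{thm:xtreap} by analyzing the composition of the $x$-treaps $D_0, D_1, \ldots, D_m$ with $m = \log_{1+\alpha}\log_2 N$ and $D_i.x = 2^{(1+\alpha)^i}$. First I would verify the structural bookkeeping: since $D_i.x = 2^{(1+\alpha)^i}$, the largest treap $D_m$ has $D_m.x = 2^{\log_2 N} = N$, so the composition can hold all $N$ elements; and by Lemma~\ref{lem:card} each $D_i$ occupies $O(\frac{(D_i.x)^{1+\alpha}}{B}\log_\lambda D_i.x)$ blocks. The key observation for space is that the sizes grow doubly-exponentially, so $(D_i.x)^{1+\alpha} = (D_{i+1}.x)$ — hence the capacities themselves form a geometric-in-the-exponent sequence dominated by the last term $O(\frac{N^{1+\alpha}}{B}\log_\lambda N)$... wait, this overshoots the claimed $O(\frac{N}{B}\log_{\lambda/B}\frac{N}{B})$ bound, so I must instead argue that each $D_i$ is only charged space proportional to the number of elements actually stored in it (the treap's space usage should really be stated in terms of its current element count, not its capacity $x^{1+\alpha}$); summing the per-element space charges of $O(\frac{1}{B}\log_{\lambda/B}\frac{N}{B})$ over all $N$ elements gives the bound. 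I would make this precise by noting that we only advance elements to $D_{i+1}$ when $D_i$ reaches capacity, and a standard argument shows at most $O(N)$ elements total reside across all treaps at any time, each in a structure of height $O(\log_{\lambda/B}\frac{N}{B})$.

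Next I would handle the I/O cost of \update. A single \update\ is a \bi\ of a one-element batch into $D_0$, which by Theorem~\ref{thm:xtreap} costs $O(\frac{1+\alpha}{B}\log_\lambda D_0.x) = O(1)$ amortized (for constant $\alpha$). The real cost comes from the cascading \fld/\bi\ when $D_i$ fills up. The crucial amortization point: when $D_i$ reaches capacity $\Theta((D_i.x)^{1+\alpha}) = \Theta(D_{i+1}.x)$, operation \fld\ extracts $\Theta((D_i.x)^{1+\alpha})$ elements and \bi's them into $D_{i+1}$ in batches of size $\le c\cdot D_{i+1}.x$; since $(D_i.x)^{1+\alpha} = D_{i+1}.x$, this is $O(1)$ batches, and by Theorem~\ref{thm:xtreap} the \bi\ into $D_{i+1}$ costs $O(\frac{1+\alpha}{B}\log_\lambda D_{i+1}.x)$ per element. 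Thus each element moved from level $i$ to level $i+1$ pays $O(\frac{1}{B}\log_{\lambda/B}\frac{N}{B})$ amortized for that move (using the assumption of the excerpt that a key, once \extmin'd, is never reinserted — so elements only move \emph{forward} through the levels, aside from the interior dynamics already absorbed inside each treap's own potential). Since there are only $m+1 = O(\log\log N)$ levels, an element could be charged $O(\log\log N)$ such moves — but in fact the standard doubling/flush argument charges each element only $O(1)$ flushes amortized per level because $D_{i+1}$ has capacity a full power larger than $D_i$, so $D_i$ must be refilled $(D_i.x)^{\alpha}$ times before $D_{i+1}$ overflows once; summing the geometric series of flush-out costs gives $O(\frac{1}{B}\log_{\lambda/B}\frac{N}{B})$ amortized per \update\ in total. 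Rewriting $\log_\lambda D_i.x$ in terms of $\log_{\lambda/B}\frac{N}{B}$ is a routine change-of-base step (using $\log_\lambda x = \Theta(\log_{\lambda/B}\frac{x}{B})$ when $\lambda \ge 2B$, with a boundary check for small $\lambda$).

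Then I would handle \extmin\ and \delete. Operation \delete\ is trivial: it only inserts a key into the hash table $X$, costing $O(1)$ (or $O(\lceil\cdot\rceil)$ counting the hash-table access, matching the ceiling in the bound). For \extmin: by the invariant that the maximum represented priority in $D_i.rep$ is below the minimum represented priority in $D_{i+1}.rep$, the global minimum lives in the first nonempty $D_i$. We call \be\ on that $D_i$, push the returned elements down into the empty bottom front buffer of $D_{i-1}$, and recurse; by Theorem~\ref{thm:xtreap} one \be\ on $D_i$ costs $O(\lambda^{\alpha/(1+\alpha)}\frac{1+\alpha}{B}\log_\lambda D_i.x)$ per element, and re-inserting those elements into $D_{i-1}$'s bottom front buffer (which \be\ leaves empty, so a pure append) costs $O(\frac{1}{B})$ per element plus whatever potential \bi/\re\ charge — here I'd invoke that inserting into an empty bottom front buffer is exactly the cheap direction in the $x$-treap potential. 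Cascading up through $O(\log\log N)$ levels, and using that the total batch sizes shrink geometrically going up (level $i$ returns $\Theta(D_i.x)$ elements but $D_{i-1}.x$ is a full power smaller, so it gets refilled many times), the amortized cost per extracted element is $O(\lambda^{\alpha/(1+\alpha)}\frac{1}{B}\log_{\lambda/B}\frac{N}{B})$. The outer $\log_{\lambda/B}\frac{N}{B}$ factor in the theorem — and the ceiling — comes from the \emph{ghost} phenomenon described in the overview: after a key's representative is extracted, up to $O(\log_{\lambda/B}\frac{N}{B})$ stale copies (one per level, by the no-duplicates-per-buffer property that \re\ enforces via Lemma~\ref{lem:res}) remain and will each be surfaced by a future \be, discarded via a hash-table lookup in $X$ (the $\lceil\cdot\rceil$), and so each genuine \extmin\ is charged the cost of clearing $O(\log_{\lambda/B}\frac{N}{B})$ ghosts, each costing $O(\lceil\frac{\lambda^{\alpha/(1+\alpha)}}{B}\log_{\lambda/B}\frac{N}{B}\rceil)$, giving the stated product.

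The main obstacle I expect is the ghost accounting and making the amortization across levels airtight: I must show (i) that the number of ghost copies of any key is genuinely $O(\log_{\lambda/B}\frac{N}{B})$ and not more — which relies on \re\ being invoked often enough that no buffer ever holds two elements with the same key, so each of the $O(\log_{\lambda/B}\frac{N}{B})$ buffers along the DAG path contributes at most one ghost — and (ii) that charging the cost of flushing a level-$i$ \be-output up into level $i-1$ does not double-count against the potential already "spent" inside Theorem~\ref{thm:xtreap}'s analysis. The clean way around (ii) is to treat each $D_i$ as a black box with the amortized guarantees of Theorem~\ref{thm:xtreap} and maintain a separate, simple inter-treap potential $\Phi_{\text{glob}} = \sum_i (\text{fill level of } D_i) \cdot (\text{remaining levels above}) \cdot \frac{\lambda^\varepsilon}{B}$ that pays for the flush-down and flush-up cascades; verifying $\Delta\Phi_{\text{glob}} \le 0$ for each cascade step is the one genuinely new calculation, and it mirrors the per-buffer computations in Lemmata~\ref{lem:bins} and~\ref{lem:bext}. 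Finally I'd do the change-of-base bookkeeping to replace every $\log_\lambda D_i.x$ by $\log_{\lambda/B}\frac{N}{B}$ and confirm the $\lambda \in [2,N]$ range is respected throughout.
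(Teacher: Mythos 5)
Your outline follows the paper's own route---treat each $D_i$ as a black box with the guarantees of Theorem~\ref{thm:xtreap}, sum the level-$i$ per-element costs over the doubly-exponentially growing treaps into a geometric series totalling $\OO{\frac{1}{B}\log_{\frac{\lambda}{B}}\frac{N}{B}}$ (with an extra $\lambda^{\frac{\alpha}{1+\alpha}}$ factor for upward moves), and handle \delete\ and ghosts via the hash table $X$, with $\OO{\log_{\frac{\lambda}{B}}\frac{N}{B}}$ stale copies per key (Lemma~\ref{lem:card}) each charged $\OO{\lceil\frac{\lambda^{\alpha/(1+\alpha)}}{B}\log_{\frac{\lambda}{B}}\frac{N}{B}\rceil}$---but there is a genuine gap precisely at the inter-treap amortization that you defer to the end. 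Your \update\ accounting rests on the claim that elements ``only move forward through the levels,'' which is false: every \extmin\ cascade moves $\Theta(D_i.x)$ elements from $D_i$ up into the bottom buffer of $D_{i-1}$, and these can later be flushed back down when $D_{i-1}$ overflows, so charging each downward crossing to the element itself does not bound how many flush-downs of $D_i$ occur per \update. The missing ingredient, which is the substance of the paper's proof, is the structural claim that elements arriving in $D_i$ from below during \extmin\ can never by themselves raise $D_i$ above a constant fraction of its capacity; hence whenever \fld\ fires on $D_i$, a constant fraction of its content was \update d since the last flush, and the flush is paid by an update potential of $\frac{1+\alpha}{B}\log_{\frac{\lambda}{B}}\frac{D_i.x}{B}$ deposited \emph{per update at every level $i$} (these deposits sum geometrically to $\OO{\frac{1}{B}\log_{\frac{\lambda}{B}}\frac{N}{B}}$). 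A symmetric claim is needed on the way up: between two \be\ calls on $D_i$, either $\Omega(D_i.x)$ \extmin s occurred (each depositing $\lambda^{\frac{\alpha}{1+\alpha}}\frac{1+\alpha}{B}\log_{\frac{\lambda}{B}}\frac{D_i.x}{B}$ at level $i$), or the elements previously brought up disappeared through a flush-down that the update potential has already paid for. Without these two claims your ``each element pays for its own moves'' argument does not close.

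Moreover, the concrete repair you propose, $\Phi_{\mathrm{glob}}=\sum_i(\text{fill of }D_i)\cdot(\text{levels above }D_i)\cdot\frac{\lambda^{\alpha/(1+\alpha)}}{B}$, has the wrong shape on both axes. The per-element cost of crossing level $i$ is level-dependent, $\Theta\bigl((1+\alpha)^{i}/(B\log\frac{\lambda}{B})\bigr)$ for \bi\ and $\lambda^{\frac{\alpha}{1+\alpha}}$ times that for \be/\flu, so the deposit must grow doubly exponentially in $i$, not with the number of levels above (which is only $\OO{\log\log N}$). Also, charging a uniform rate of $\frac{\lambda^{\alpha/(1+\alpha)}}{B}$ per level to every inserted element makes a single \update\ increase the potential by $\Omega\bigl(\frac{\lambda^{\alpha/(1+\alpha)}}{B}\bigr)$, which already exceeds the claimed $\OO{\frac{1}{B}\log_{\frac{\lambda}{B}}\frac{N}{B}}$ bound for large $\lambda$ (e.g.\ $\lambda=\Theta(N)$), whereas dropping the $\lambda^{\alpha/(1+\alpha)}$ factor leaves nothing to pay for \be\ on the large treaps. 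The paper avoids this by keeping two separate families of deposits---an update potential and an extract-min potential, each matching its own level-$i$ cost---plus the ghost potential; your ghost/\delete\ accounting and your (admittedly informal) space argument do match the paper, which itself only sketches the reduction to $\OO{\frac{N}{B}\log_{\frac{\lambda}{B}}\frac{N}{B}}$ blocks.
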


\begin{proof}
	To each element in $D_i$ that has been \update d since the last time that $D_i$ has undergone a \fld~operation, we define an \emph{update potential} of:
	$$ \frac{1+\alpha}{B} \log_{\frac{\lambda}{B}} \frac{2^{(1+\alpha)^i}}{B} = (1+\alpha)^{i+1}\frac{1}{B \log \frac{\lambda}{B}}-\frac{1+\alpha}{B}\log_{\frac{\lambda}{B}} B$$
	When $D_i$ reaches capacity and \fld~is called on it, the number of elements that have been \update d in the priority queue is a constant fraction of the $D_i$'s capacity. This is because the only way element could appear in $D_i$ is from:
	
	\begin{itemize}
		\item The elements that were already in the priority queue at $D_j$ for $j<i$. 
		\item The elements that were newly inserted.
		\item The elements inserted from $D_j$ for $j>i$ during \extmin\ process. However, these elements will never bring the number in $D_i$ above a constant fraction of the capacity.
	\end{itemize}
	
	\noindent Hence, an \ins\ operation, before any needed \fld\ operation, increases the update potential by:
	
	\begin{align*}
	\sum_{i=0}^{\log_{1+\alpha}\log_2 N} \left( (1+\alpha)^{i+1}\frac{1}{B \log \frac{\lambda}{B}}-\frac{1+\alpha}{B}\log_{\frac{\lambda}{B}} B \right)
	&\leq  
	\sum_{i=0}^{\log_{1+\alpha}\log_2 N} (1+\alpha)^{i+1}\frac{1}{B \log \frac{\lambda}{B}}
	\\
	&= O \left( \frac{\log N}{B \log \frac{\lambda}{B}} \right)
	\\
	&= O \left( \frac{\log_\frac{\lambda}{B} N}{B} \right)
	\end{align*}

	To every key $k$ that has been returned by operation \extmin\ and occurs in $d$ elements in $D$, we define a potential of $\Phi(k) = d\cdot \frac{c_d}{B}\lambda^{\varepsilon}\cdot \log_{\frac{\lambda}{B}} D.x $ (for constant $c_d\geq 1$). \ins\ does not affect this potential by the assumption that an extracted key is not reinserted to the structure. 
	
	To $D_i$ we give an \emph{extract-min potential} of:
	$$ \lambda^{\frac{\alpha}{1+\alpha}}\frac{1+\alpha}{B} \log_{\frac{\lambda}{B}} \frac{2^{(1+\alpha)^i}}{B} 
	= (1+\alpha)^{i+1}\frac{\lambda^{\frac{\alpha}{1+\alpha}}}{B \log \frac{\lambda}{B}}-\lambda^{\frac{\alpha}{1+\alpha}}\frac{1+\alpha}{B}\log_{\frac{\lambda}{B}} B$$ 
	for the number of \extmin~operations performed since the last time a \be\ operation was performed on $D_i$. When $D_i$ gets empty and \extmin~is needed to fill it, the number of elements \extmin 'd from the priority queue since the last time that happened is a constant fraction of $D_i$'s capacity. This is true, since the only way elements could be removed from $D_i$ between \be s is because $D_i$ reached its capacity from \update s and thus \fld\ needed to be called on $D_{i-1}$. This, however, only reduces the number of elements to a constant fraction of $D_i$'s capacity.
	
	\noindent Hence, an \extmin~operation, before any needed \fld, increases the extract-min potential by:
	
	\begin{align*}
	\sum_{i=0}^{\log_{1+\alpha}\log_2 N} \left( (1+\alpha)^{i+1}\frac{\lambda^{\frac{\alpha}{1+\alpha}}}{B \log \frac{\lambda}{B}}-\lambda^{\frac{\alpha}{1+\alpha}}\frac{1+\alpha}{B}\log_{\frac{\lambda}{B}} B \right)
	&\leq  
	\sum_{i=0}^{\log_{1+\alpha}\log_2 N} (1+\alpha)^{i+1}\frac{\lambda^{\frac{\alpha}{1+\alpha}}}{B \log \frac{\lambda}{B}}
	\\
	&= O \left( \lambda^{\frac{\alpha}{1+\alpha}}\frac{\log N}{B \log \frac{\lambda}{B}} \right)
	\\
	&= O \left( \lambda^{\frac{\alpha}{1+\alpha}}\frac{\log_{\frac{\lambda}{B}} N}{B} \right)
	\end{align*}

	\noindent Finally, the potential is constructed to exactly pay for the cost of \fld~between $x$-treaps. 
	
	The worst-case cost of operation \delete\ is $\OO{1}$ I/Os \cite{IP12}. However, its amortized cost includes the potential necessary to remove all the elements with the deleted key from the structure. We introduce an extra ``ghost potential'' of $\OO{\frac{\lambda^{\frac{\alpha}{1+\alpha}}}{B}\log_{\frac{\lambda}{B}} \frac{N}{B}}$ to every key in $X$ that has been \delete d or \extmin 'd. The stated I/O-cost for \delete\ follows from the fact that the first time a key is \delete d or \extmin 'd from the structure $\OO{\log_{\frac{\lambda}{B}}\NN}$ elements in the structure get charged by this potential (by Lemma \ref{lem:card}). Whenever an \extmin\ returns an element with ghost potential, this potential is released in order to \extmin\ the next element at no extra amortized cost.
	
	The priority queue may contain an $N$-treap that occupies $\OO{\left(\NN\right)^{1+\alpha}\log_{\frac{\lambda}{B}}\NN}$ blocks (by Theorem \ref{thm:xtreap}). However, with carefull manipulation of the unaddressed empty space, this space usage (and thus of the whole priority queue) can be reduced back to $\OO{\NN\log_{\frac{\lambda}{B}}\NN}$ blocks.
\end{proof}

\section{Cache-oblivious buffered repository trees}

A \emph{buffered repository tree (BRT)} \cite{BGVW00,ABDHM07,CR18} stores a multi-set of at most $N$ \emph{elements}, each associated with a \emph{key} in the range $\left[1\dots k_{\max}\right]$. It supports the operations \ins~and \ext~that, respectively, insert a new element to the structure and remove and report all elements in the structure with a given key. 
To implement a BRT, we make use of the $x$-box \cite{BDFILM10}.  
Given positive real $\alpha \leq 1$ and key range $\left[k_{\min}, k_{\max}\right)\subseteq \Re$, an $x$-\emph{box} $D$ stores a set of at most $\frac{1}{2}\left(D.x\right)^{1+\alpha} $ elements associated with a key $k\in \left[D.k_{\min}, D.k_{\max}\right)$. An $x$-box supports the following operations: 

\begin{itemize}
	
	\item \bi $\left(D, e_1, e_2 , \ldots, e_{b}\right)$: For constant $c\in \left(0,\frac{1}{2}\right]$, insert $b \leq c\cdot D.x$ elements $e_1, e_2 , \ldots, e_{b}$  to $D$, given they are key-sorted with keys  $e_i.k \in \left[D.k_{\min}, D.k_{\max}\right), i\in \left[1,b\right]$. 
	
	
	\item \sea $\left(D, \kappa \right)$: Return pointers to all elements in $D$ with key $\kappa$, given they exist in $D$ and $\kappa \in \left[D.k_{\min}, D.k_{\max}\right)$.
	
\end{itemize}

To implement operation \ext$\left(D, \kappa \right)$~that extracts all elements with key $\kappa$ from an $x$-box $D$, we \sea$\left(D, \kappa \right)$ and remove from $D$ all returned pointed elements. 

The BRT on $N$ elements consists of $1+\log_{1+\alpha}\log_{2} N$ $x$-boxes of doubly increasing size with parameter $\alpha$ being set the same in all of them. We obtain the stated bounds by modifying the proof of the $x$-box \cite[Theorem 5.1]{BDFILM10} to account for Lemmata \ref{lem:xbox} and \ref{lem:brtarray}.

\begin{lemma}\label{lem:xbox}
	For $D.x= \Omega \left( \lambda^{\frac{1}{1+\alpha}}\right)$, an $x$-box supports operations \bi\ in  $\OO{\frac{1+\alpha}{B}\log_{\frac{\lambda}{B}} \frac{D.x}{B}}$ and \ext~on $K$ extracted elements in $\OO{\left(1+\alpha \right)\log_{\frac{\lambda}{B}} \frac{D.x}{B} +\frac{K}{B}}$ amortized I/Os per element, using $\OO{\frac{\dx^{1+\alpha}}{B}}$ blocks, for some $\lambda \in \left[2, N \right]$ and for any real $\alpha\in (0,1]$. 
\end{lemma}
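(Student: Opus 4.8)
The plan is to mirror the original $x$-box analysis of Brodal et al.~\cite[Theorem 5.1]{BDFILM10}, but to re-run the recursion with the base case changed from size $\Theta(B)$ (or $\Theta(1)$ in the cache-oblivious version) to size $\Theta(\lambda^{1/(1+\alpha)})$, so that the branching factor $\sqrt{x}$ accumulates a $\log_{\lambda/B}$ rather than a $\log_2$. Concretely, I would first set up the structural recursion exactly as in the original: an $x$-box with $D.x > c'\lambda^{1/(1+\alpha)}$ has a top buffer of size $D.x$, a middle buffer of size $(D.x)^{1+\alpha/2}$, a bottom buffer of size $(D.x)^{1+\alpha}$, together with $\le \tfrac14 (D.x)^{1/2}$ upper-level $\sqrt{x}$-boxes and $\le \tfrac14 (D.x)^{(1+\alpha)/2}$ lower-level $\sqrt{x}$-boxes, and when $D.x \le c'\lambda^{1/(1+\alpha)}$ we drop to a flat-array base case. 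The first step is a counting lemma analogous to Lemma \ref{lem:card}: the recursion $L(x) = 3 + 2L(x^{1/2})$ with $L(c'\lambda^{1/(1+\alpha)}) = 1$ solves to $L(D.x) = \OO{\log_\lambda D.x} = \OO{(1+\alpha)\log_\lambda D.x}$ levels, and since each level stores $\OO{(D.x)^{1+\alpha}}$ elements the space is $\OO{(D.x)^{1+\alpha}/B}$ blocks (note no extra $\log$ factor here, unlike the $x$-treap, because the $x$-box keeps only one copy of each key). I would also invoke the tall-cache / random-access bound in the style of Lemma \ref{lem:ra} to argue that scanning all buffers of an $x$-box plus randomly visiting its $\OO{(D.x)^{(1+\alpha)/2}}$ children costs $\scan{(D.x)^{1+\alpha}}$ I/Os.

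The second step is the amortized analysis of \bi. Exactly as in the $x$-treap (Lemma \ref{lem:bins}) and the original $x$-box, I would give each buffer a potential proportional to its overflow past a target fill ratio, scaled by its "height above the leaves" $(h-i)$ where $h = \OO{(1+\alpha)\log_{\lambda/B} D.x}$; the non-recursive work of pushing a full top buffer of an $x$-box into the top buffers of its upper-level children, or of flushing the middle/bottom buffer down, is a constant number of scans and order-statistic selections over arrays of size $\OO{(D.x)^{1+\alpha}}$, hence $\scan{(D.x)^{1+\alpha}}$ I/Os, and this is paid for by the release of $\Theta((D.x))$ units of potential (the decreased $(h-i)$ factor). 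Handing each newly inserted element $\OO{h} = \OO{(1+\alpha)\log_{\lambda/B} D.x}$ initial units of potential, amortized over the $\Theta((D.x))$ elements in a batch and the $\OO{1/B}$ per-element base-case cost from Lemma \ref{lem:brtarray}, yields the claimed $\OO{\frac{1+\alpha}{B}\log_{\lambda/B}\frac{D.x}{B}}$ amortized I/Os per inserted element (the $D.x/B$ inside the log, versus $D.x$, just reflects that an array of $\OO{B}$ consecutive elements is scanned in $\OO{1}$ I/Os, i.e. the base case absorbs the bottom $\log_{\lambda/B} B$ levels).

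The third step handles \ext, i.e. \sea\ followed by deletion of the reported pointers. A \sea$(D,\kappa)$ walks one root-to-leaf path: at each level it binary-searches (a constant number of I/Os, since each buffer is key-sorted, or via the fractional-cascading/indexing structure used in \cite{BDFILM10}) to locate key $\kappa$, descends into the unique child whose key range contains $\kappa$, and reports any matching elements it passes through. Since there are $\OO{(1+\alpha)\log_{\lambda/B}\frac{D.x}{B}}$ levels, the search costs that many I/Os plus $\OO{K/B}$ to output the $K$ hits (they lie in $\OO{1}$ sorted runs per visited buffer and can be streamed out). Deleting the $K$ reported elements and left-justifying is another $\OO{K/B}$ scan, and — this is the one genuinely delicate point — these deletions may cause buffers to underflow, so I must either show (i) that underflow is repaired lazily / only when a buffer becomes completely empty, à la the $x$-box's handling of removals, charging the repair to a potential term fed by the prior inserts, or (ii) bound the number of such repairs by the number of deleted elements. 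I expect this deletion-rebalancing bookkeeping to be the main obstacle: everything else is a direct transcription of \cite[Theorem 5.1]{BDFILM10} with $2 \rightsquigarrow \lambda/B$ in the logarithm base and $B \rightsquigarrow \lambda^{1/(1+\alpha)}$ in the base-case size. I would structure the write-up to quote the original $x$-box search/delete rebalancing argument verbatim wherever the base-case change does not interact with it, and only re-derive the level count and the per-level I/O cost in the new parameters.
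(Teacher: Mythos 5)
Your proposal follows essentially the same route as the paper: the paper's proof is a (much terser) modification of the $x$-box bounds of \cite{BDFILM10}, cutting the recursion off at size $\Theta\left(\lambda^{\frac{1}{1+\alpha}}\right)$ instead of $\Theta\left(B^{\frac{1}{1+\alpha}}\right)$, charging each inserted element $\OO{1/B}$ per level (exactly your potential argument, mirroring Lemma~\ref{lem:bins}), and obtaining \ext\ as a fractional-cascading search over the $\OO{\log_{\frac{\lambda}{B}}\frac{D.x}{B}}$ levels plus an $\OO{K/B}$ scan to report and remove the hits. Two small remarks: per-buffer location of the key must indeed go through the lookahead/fractional-cascading pointers (a plain binary search in a sorted buffer is not $\OO{1}$ I/Os), which you do invoke, and the deletion-underflow bookkeeping you flag as the main obstacle is simply not addressed in the paper, which removes the pointed elements without rebalancing.
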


\begin{proof}
	Regarding \bi\ on a cache-aware $x$-box, we obtain $\OO{\frac{1+\alpha}{B}\log_{\frac{\lambda}{B}} \frac{D.x}{B}}$ amortized I/Os by modifying the proof of \bi\ \cite[Theorem 4.1]{BDFILM10} according the proof of Lemma \ref{lem:bins}. Specifically, every element is charged $\OO{1/B}$ amortized I/Os, instead of $\OO{1/B^{\frac{1}{1+\alpha}}}$, and the recursion stops when $D.x = \OO{\lambda^{\frac{1}{1+\alpha}}}$, instead of $D.x = \OO{B^{\frac{1}{1+\alpha}}}$.
	
	Regarding \sea ing for the first occurrence of a key in a cache-aware $x$-box, we obtain $\OO{\log_{\frac{\lambda}{B}} \frac{D.x}{B}}$ amortized I/Os by modifying the proof of \sea~\cite[Lemma 4.1]{BDFILM10}, such that the recursion stops when $D.x = \OO{\lambda^{\frac{1}{1+\alpha}}}$, instead of $D.x = \OO{B^{\frac{1}{1+\alpha}}}$. To \ext\ all $K$ occurrences of the searched key, we access them by scanning the $x$-box and by following fractional cascading pointers, which incurs an extra $\OO{\frac{K}{B}}$ I/Os. 
\end{proof}

\begin{lemma}\label{lem:brtarray}
	An $\OO{\lambda^{\frac{1}{1+\alpha}}}$-box supports operation \bi\ in $\OO{1/B}$ amortized I/Os per element and operation \ext~on $K$ extracted elements in $\scan{\lambda^{\frac{\alpha}{1+\alpha}}}$ amortized I/Os per element, for some $\lambda \in \left[2, N \right]$ and for any real $\alpha\in (0,1]$. 
\end{lemma}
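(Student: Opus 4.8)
The plan is to mirror the base-case construction used for the $x$-treap in Lemma \ref{lem:array}, simplified to the $x$-box setting, where there are no priorities and only \bi\ and \ext\ (via \sea) need to be supported. Since $D.x = \OO{\lambda^{\frac{1}{1+\alpha}}}$, such a box stores at most $\frac{1}{2}\dx^{1+\alpha} = \OO{\lambda}$ elements, so I would represent it by a single left-justified array of capacity $\OO{\lambda}$ (that is, $\OO{\lambda/B}$ blocks, which is $\OO{\dx^{1+\alpha}/B}$) holding the elements key-sorted, together with an element counter and whatever index the enclosing $x$-box recursion of \cite{BDFILM10} uses to address it; the only invariant enforced is that all keys lie in $\left[D.k_{\min}, D.k_{\max}\right)$.

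For \bi, given $b \le c\,D.x = \OO{\lambda^{\frac{1}{1+\alpha}}}$ key-sorted elements, I would append them to the array as a new sorted run (temporarily giving up global sortedness, exactly as in Lemma \ref{lem:array}) and update the counter, touching only the $\OO{\lceil b/B \rceil}$ trailing blocks. Amortizing over the $\OO{\lambda}$ elements that fill the box before it must be emptied --- which, because the enclosing recursion always moves a constant fraction of a sub-box's parameter at a time, arrive in $\OO{\lambda^{\frac{\alpha}{1+\alpha}}}$ batches each of size $\Theta(\lambda^{\frac{1}{1+\alpha}})$ --- the total appending cost is $\OO{\lambda/B}$, i.e. $\OO{1/B}$ per element, verbatim the \bi\ argument of Lemma \ref{lem:array}. (In the residual regime $\lambda^{\frac{1}{1+\alpha}} < B$ the entire base array fits in main memory by the tall-cache assumption, so the ceilings vanish amortized, exactly as in Lemma \ref{lem:ra}.) For \ext\ (and hence \sea\ for the first occurrence), I would make one pass that first merges the $\OO{1}$ sorted runs into one --- as \re\ does for the treap base case, costing $\scan{\lambda}+\OO{1}$ by the argument of Lemma \ref{lem:res} --- copies the $K$ elements with the queried key to the output, and writes the remainder back left-justified; this costs $\scan{\lambda} = \OO{\lambda/B}$ I/Os. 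The claimed $\scan{\lambda^{\frac{\alpha}{1+\alpha}}}$ amortized bound per element then follows as for \be\ in Lemma \ref{lem:array}: one full $\OO{\lambda/B}$ pass is charged against the $\Theta(\lambda^{\frac{1}{1+\alpha}}) = \lambda/\lambda^{\frac{\alpha}{1+\alpha}}$ elements of progress it makes, giving $\OO{\lambda^{\frac{\alpha}{1+\alpha}}/B} = \scan{\lambda^{\frac{\alpha}{1+\alpha}}}$ each.

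The step I expect to be the main obstacle is precisely this last amortization. A single \ext\ may legitimately return far fewer than $\Theta(D.x)$ elements, so its $\OO{\lambda/B}$-cost pass cannot be paid for by that call's own output alone; I would resolve it the way Lemma \ref{lem:array} implicitly does for \be, namely by bundling each full scan of the base array with the $\Theta(D.x)$-sized group of elements that the enclosing $x$-box recursion shuttles through the base case --- the flush triggered when the box overflows, or the grouped extraction performed in the modified \cite{BDFILM10} analysis underlying Lemma \ref{lem:xbox} --- so that every $\OO{\lambda/B}$-cost scan is amortized against $\Theta(\lambda^{\frac{1}{1+\alpha}})$ elements. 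Once this charging is fixed, the remaining checks are routine: the array occupies $\OO{\lambda/B} = \OO{\dx^{1+\alpha}/B}$ blocks, and substituting these base-case costs into the $x$-box recursion of Lemma \ref{lem:xbox} (stopping the recursion at $D.x = \OO{\lambda^{\frac{1}{1+\alpha}}}$) and composing the $1+\log_{1+\alpha}\log_{2} N$ $x$-boxes yields the BRT bounds stated after Lemma \ref{lem:xbox}.
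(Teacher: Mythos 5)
Your construction is essentially the paper's: its proof of this lemma is a two-line argument that allocates a flat array of size $\OO{M}$ (equivalently $\OO{\lambda}$ capacity for the at most $\frac{1}{2}\dx^{1+\alpha}$ stored elements), implements \bi\ by simply appending the inserted elements, and implements \ext\ by one scan that removes and returns all occurrences of the queried key --- it does not maintain key-sortedness or merge runs, since the full scan makes that machinery unnecessary. The amortization subtlety you flag (a $\OO{\lambda/B}$-cost scan when $K$ is small) is not spelled out in the paper's proof either; your charging of each scan against the $\Theta\left(\lambda^{\frac{1}{1+\alpha}}\right)$-sized batches shuttled through the base case is consistent with how Lemma \ref{lem:array} and the enclosing recursion behind Lemma \ref{lem:xbox} do the accounting, so your version is, if anything, more explicit than the original.
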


\begin{proof}
	We allocate an array of size $\OO{M}$ and implement \bi\ by simply appending the inserted element to the array and \ext~by scanning the array and removing and returning all occurrences of the searched key. 
\end{proof}

\begin{theorem}\label{thm:brt}
	There exist cache-oblivious buffered priority trees on a multi-set of $N$ elements and  $K$ extracted elements that support operations \ins\ in $\OO{\frac{1}{B}\log_{\frac{\lambda}{B}} \frac{N}{B}}$ and \ext\ in $\OO{\frac{\lambda^{\frac{\alpha}{1+\alpha}}}{B}\log_{\frac{\lambda}{B}} \frac{N}{B}+\frac{K}{B}}$ amortized I/Os per element, using $\OO{\frac{N}{B}}$ blocks, for some $\lambda \in \left[2, N \right]$ and for any real $\alpha\in (0,1]$. 
\end{theorem}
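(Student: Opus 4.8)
The plan is to build the BRT exactly as the priority queue of Theorem~\ref{thm:pq} is built from $x$-treaps, only now from $x$-boxes: a hierarchy of $1+\log_{1+\alpha}\log_2 N$ $x$-boxes $D_0,D_1,\ldots$ with $D_i.x = 2^{(1+\alpha)^i}$ and the same $\alpha$ in all of them. An \ins\ is a single-element \bi\ into $D_0$; when $D_i$ reaches capacity, its contents are flushed down and \bi-ed into the first descendant box that can absorb them without reaching capacity; and \ext$(\kappa)$ calls \sea$(D_i,\kappa)$ on \emph{every} box and removes the returned elements. The two ingredients are Lemma~\ref{lem:xbox}, supplying the per-box amortized costs above the base-case threshold $D.x = \Om{\lambda^{\frac{1}{1+\alpha}}}$, and Lemma~\ref{lem:brtarray}, handling the $\OO{\lambda^{\frac{1}{1+\alpha}}}$-box base case. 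This is precisely the $x$-box hierarchy of \cite[Theorem~5.1]{BDFILM10} with those two lemmata substituted for its internal per-box and base-case bounds, so the remaining work is the amortized bookkeeping, which mirrors the proof of Theorem~\ref{thm:pq}.

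For \ins\ I would reuse the ``insert potential'' argument of Theorem~\ref{thm:pq}: deposit on each inserted element the potential $\sum_{i} \frac{1+\alpha}{B}\log_{\frac{\lambda}{B}}\frac{D_i.x}{B}$, which by the doubly-exponential growth of $D_i.x$ is a geometric series whose sum is $\OO{\frac{1}{B}\log_{\frac{\lambda}{B}}\frac{N}{B}}$ (the identical computation already carried out in Theorem~\ref{thm:pq}). Whenever $D_i$ overflows, a constant fraction of its contents are elements still carrying the level-$i$ term of this potential — the flows coming from above and from \ext-induced rebalancing cannot dilute this below a constant fraction, exactly as argued for the treap hierarchy — and that term pays for the amortized $\OO{\frac{1+\alpha}{B}\log_{\frac{\lambda}{B}}\frac{D_i.x}{B}}$-per-element cost of the \bi\ into the next box (Lemma~\ref{lem:xbox}), and for the $\OO{1/B}$-per-element cost at the base case (Lemma~\ref{lem:brtarray}).

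For \ext$(\kappa)$ the cost is the sum over all $\OO{\log_{1+\alpha}\log_2 N}$ boxes of (i) a \sea\ into $D_i$, costing $\OO{(1+\alpha)\log_{\frac{\lambda}{B}}\frac{D_i.x}{B}}$ amortized above the threshold and $\OO{\lambda^{\frac{\alpha}{1+\alpha}}/B}$ at a base case, plus (ii) $\OO{K_i/B}$ for reporting and deleting the $K_i$ elements found in $D_i$. Since $\sum_i K_i = K$, part~(ii) contributes $\OO{K/B}$ in total; for part~(i) the search terms again form a geometric series in $i$, so $\sum_i \log_{\frac{\lambda}{B}}\frac{D_i.x}{B}$ is dominated by its largest term and is $\OO{\log_{\frac{\lambda}{B}}\frac{N}{B}}$, while the $\OO{\lambda^{\frac{\alpha}{1+\alpha}}/B}$ base-case charge is incurred only at the base-case boxes and is absorbed into $\OO{\frac{\lambda^{\frac{\alpha}{1+\alpha}}}{B}\log_{\frac{\lambda}{B}}\frac{N}{B}}$. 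I expect this to be the main obstacle: unlike \extmin\ in the priority queue, which only touches the topmost non-empty structure, \ext\ must probe every box, so one must check carefully that these per-box contributions telescope rather than pick up a spurious $\log\log N$ factor and that the $\lambda^{\frac{\alpha}{1+\alpha}}$ penalty of Lemma~\ref{lem:brtarray} is charged exclusively to the base cases.

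Finally, for the space bound each $x$-box $D_i$ occupies $\OO{D_i.x^{1+\alpha}/B}$ blocks by Lemma~\ref{lem:xbox} — crucially \emph{without} the extra logarithmic factor of an $x$-treap, since the $x$-box is a plain dictionary with no duplicated front/rear buffers — so summing the super-geometric series over $i$ and, exactly as in the proof of Theorem~\ref{thm:pq}, reclaiming the unaddressed empty space of the largest box brings the total to $\OO{N/B}$ blocks. Cache-obliviousness is inherited from the $x$-box together with the hash-free design; the cache-aware reading of the bounds follows by setting $\lambda = \OO{M}$ so that $\lambda^{\frac{1}{1+\alpha}}=\OO{M}$ and the base-case array fits in main memory.
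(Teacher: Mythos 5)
Your proposal is correct and follows essentially the same route as the paper: the paper's entire argument for Theorem~\ref{thm:brt} is the construction of a hierarchy of $1+\log_{1+\alpha}\log_2 N$ doubly-increasing $x$-boxes with the bounds obtained by plugging Lemmata~\ref{lem:xbox} and~\ref{lem:brtarray} into the proof of \cite[Theorem~5.1]{BDFILM10}, which is exactly your construction. The amortization details you spell out (the geometric potential series mirroring Theorem~\ref{thm:pq}, the telescoping of per-box search costs, and the space reclamation for the largest box) are left implicit in the paper, so your write-up is, if anything, more explicit than the original.
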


\section{Applications to graph algorithms}

\subsection{Directed single-source shortest paths}

\begin{theorem} \label{thm:dsssp}
	Single source shortest paths on a graph with $V$ nodes and directed $E$ edges can be computed cache-obliviously in $\OO{\frac{V^{\frac{1}{1+\alpha}} E^{\frac{\alpha}{1+\alpha}}}{B}\log^2_{\frac{E}{VB}} \frac{E}{B} + V\log_{\frac{E}{VB}} \frac{E}{B} +  \frac{E}{B} \log_{\frac{E}{VB}} \frac{E}{B}}$ I/Os, for any real $\alpha\in (0,1]$. 
\end{theorem}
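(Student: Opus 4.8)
The plan is to execute the external-memory Dijkstra framework for directed graphs of Kumar and Schwabe \cite{KS96}, in the form refined by Chowdhury and Ramachandran \cite{CR18}, but with its priority queue replaced by the structure of Theorem~\ref{thm:pq} and its ``already-visited'' test realised by the buffered repository tree of Theorem~\ref{thm:brt}, both instantiated with the \emph{same} parameter $\alpha\in(0,1]$ and with $\lambda=\Theta(E/V)$. Since $\lambda$ is a function of the input only and not of $M$ or $B$, the resulting algorithm stays cache-oblivious. The graph is stored with adjacency lists grouped and key-sorted by source node, using $\OO{E/B}$ blocks, so that scanning the out-list of every node costs $\OO{V+E/B}$ I/Os in total.

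First I would make the algorithm explicit. Keep a priority queue $Q$ of (node, tentative-distance) pairs and a BRT $T$ keyed by node. Start with $Q.\update(s,0)$ and repeatedly $\extmin$ from $Q$; since $Q$ already suppresses duplicate keys and edge weights are positive, an extracted pair $(v,d)$ settles $v$ once and for all at final distance $d$, so $\extmin$ is invoked $\OO{V}$ times. Upon settling $v$ we scan its out-list and, for each out-edge $(v,w,\ell)$, we must relax $w$ by $d+\ell$ \emph{unless $w$ has already been settled}, because invoking $Q.\update$ on an already-extracted key is disallowed by Theorem~\ref{thm:pq}. This filtering is exactly the job of the BRT: following \cite{KS96,CR18}, settling $v$ triggers one $T.\ext(v)$ (which, together with a ``settled'' marker inserted into $T$, lets the algorithm decide in batch which targets are still live) and one $T.\ins$ per out-edge, so that over the whole execution $T$ receives $\OO{E}$ insertions and $\OO{V}$ extractions reporting $\OO{E}$ elements in total, while every relaxation actually forwarded to $Q$ is aimed at a not-yet-settled key and thus respects the precondition of Theorem~\ref{thm:pq}. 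Altogether the run performs $\OO{V}$ $\extmin$s, $\OO{E}$ $\update$s, $\OO{V}$ BRT extractions, $\OO{E}$ BRT insertions, and $\OO{V+E/B}$ I/Os of adjacency scanning.

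Next I would total the I/Os, using $N=\Theta(E)$ as the size parameter for $Q$ and $T$ (one stored element per edge relaxation, plus the source), $\lambda=\Theta(E/V)$, and $\varepsilon=\frac{\alpha}{1+\alpha}$. By Theorem~\ref{thm:pq} the $\OO{E}$ updates cost $\OO{\frac{E}{B}\log_{\frac{E}{VB}}\frac{E}{B}}$ and the $\OO{V}$ extract-mins cost $\OO{V\left\lceil\frac{(E/V)^{\varepsilon}}{B}\log_{\frac{E}{VB}}\frac{E}{B}\right\rceil\log_{\frac{E}{VB}}\frac{E}{B}}$; by Theorem~\ref{thm:brt} the $\OO{E}$ BRT insertions cost $\OO{\frac{E}{B}\log_{\frac{E}{VB}}\frac{E}{B}}$ and the $\OO{V}$ BRT extractions (with $K=\OO{E}$ elements reported overall) cost $\OO{V\frac{(E/V)^{\varepsilon}}{B}\log_{\frac{E}{VB}}\frac{E}{B}+\frac{E}{B}}$; adjacency scanning adds $\OO{V+E/B}$. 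Using $\lceil x\rceil\le x+1$ together with the identity $V\cdot(E/V)^{\frac{\alpha}{1+\alpha}}=V^{\frac{1}{1+\alpha}}E^{\frac{\alpha}{1+\alpha}}$, the extract-min contribution becomes $\OO{\frac{V^{1/(1+\alpha)}E^{\alpha/(1+\alpha)}}{B}\log^2_{\frac{E}{VB}}\frac{E}{B}+V\log_{\frac{E}{VB}}\frac{E}{B}}$, which dominates the BRT-extraction contribution and absorbs the loose $\OO{V}$ summands; collecting the surviving terms gives the claimed bound. (As in \cite{KS96,CR18} this is stated for the dense regime the theorem targets, so that $\lambda\in[2,E]$ and $\log_{E/(VB)}(\cdot)\ge 1$; outside it one clamps $\lambda$ and recovers the base-$2$ bound.)

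The step I expect to be the main obstacle is the correctness bookkeeping of the BRT/priority-queue interaction: arguing precisely that (i) every relaxation handed to $Q$ is aimed at a node that has not yet been $\extmin$'d, so the ``no re-insertion'' hypothesis of Theorem~\ref{thm:pq} genuinely holds throughout the execution, and (ii) the BRT is nonetheless consulted only $\OO{V}$ times rather than once per edge --- testing each out-neighbour independently would inflate the extraction count to $\Theta(E)$ and, since each BRT extraction carries a $\lambda^{\varepsilon}/B$ overhead, overshoot the target bound by a $\Theta(E/V)$ factor. Both points are handled exactly as in the directed SSSP algorithms of \cite{KS96,CR18}; the remainder is the routine accounting above.
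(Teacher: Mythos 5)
Your proposal is correct and follows essentially the same route as the paper: both reduce to the known external-memory directed Dijkstra framework (Vitter/Kumar--Schwabe as described in \cite{CR18}), count $V$ \extmin\ and $E$ \update\ calls on the priority queue plus $V$ \ext\ and $E$ \ins\ calls on the BRT, plug in Theorems~\ref{thm:pq} and~\ref{thm:brt}, and set $\lambda=\Theta(E/V)$. You merely spell out the algebra and the no-reinsertion/BRT bookkeeping that the paper delegates to the cited descriptions, which is fine.
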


\begin{proof}
	The algorithm of Vitter \cite{V01} (described in detail in \cite[Lemma 4.1]{CR18} for the cache-oblivious model) makes use of a priority queue that supports the \update\ operation and of a BRT on $\OO{E}$ elements. Specifically, it makes $V$ calls to \extmin\ and $E$ calls to \update\ on the priority queue and $V$ calls to \ext\ and $E$ calls to \ins\ on the BRT. We obtain $\OO{V \frac{\lambda^{\frac{\alpha}{1+\alpha}}}{B}\log^2_{\frac{\lambda}{B}} \frac{E}{B} + V\log_{\frac{\lambda}{B}} \frac{E}{B} +  \frac{E}{B} \log_{\frac{\lambda}{B}} \frac{E}{B}}$ total I/Os by using Theorems \ref{thm:pq} and \ref{thm:brt} for some $\lambda \in \left[2, N \right]$. We set $\lambda = \OO{E/V}$ to obtain the stated bound. 
\end{proof}

\subsection{Directed depth- and breadth-first search}

\begin{theorem} \label{thm:ddbfs}
	Depth-first search and breadth-first search numbers can be assigned cache-obliviously to the $V$ nodes of a graph with $E$ directed edges in $\OO{\frac{V^{\frac{1}{1+\alpha}} E^{\frac{\alpha}{1+\alpha}}}{B}\log^2_{\frac{E}{VB}} \frac{E}{B} + V\log_{\frac{E}{VB}} \frac{E}{B} +  \frac{E}{B} \log_{\frac{E}{VB}} \frac{E}{B}}$ I/Os, for any real $\alpha\in (0,1]$. 
\end{theorem}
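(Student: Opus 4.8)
The plan is to mirror the proof of Theorem~\ref{thm:dsssp}: instantiate the known external-memory directed traversal algorithms with our structures and then optimize $\lambda$. First I would recall the directed DFS and BFS algorithms of Buchsbaum et al.~\cite{BGVW00}, together with the cache-oblivious implementation that underlies the SSSP bound (as spelled out for this model in \cite{ABDHM07,CR18}). In both algorithms the DFS stack (resp.\ the BFS frontier) is kept explicitly and the out-adjacency lists are streamed; a priority queue supplies the next edge to be explored in traversal order, and a single buffered repository tree keyed by vertex is used to detect, without incurring random I/Os, that a target vertex has already been numbered, so that edges into already-numbered vertices are discarded rather than pursued. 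The accounting fact I would isolate, exactly as in \cite{BGVW00,ABDHM07}, is that over the whole execution the algorithm performs $O(V)$ \extmin\ operations and $O(E)$ \update\ operations on the priority queue, and $O(V)$ \ext\ operations and $O(E)$ \ins\ operations on the BRT, plus an $O(\scan{E})$ overhead per traversal phase and $O(V + E/B)$ I/Os for maintaining the stack/frontier and scanning the input.

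Next I would substitute the bounds of Theorems~\ref{thm:pq} and \ref{thm:brt}, with $N = \TT{E}$ and $\varepsilon = \frac{\alpha}{1+\alpha}$. The $O(E)$ \update s and the $O(E)$ \ins s each contribute $\OO{\frac{E}{B}\log_{\frac{\lambda}{B}}\frac{E}{B}}$; the $O(V)$ \extmin s contribute $\OO{V\lceil \frac{\lambda^{\varepsilon}}{B}\log_{\frac{\lambda}{B}}\frac{E}{B}\rceil \log_{\frac{\lambda}{B}}\frac{E}{B}} = \OO{V\frac{\lambda^{\varepsilon}}{B}\log^2_{\frac{\lambda}{B}}\frac{E}{B} + V\log_{\frac{\lambda}{B}}\frac{E}{B}}$; and the $O(V)$ \ext s contribute $\OO{V\frac{\lambda^{\varepsilon}}{B}\log_{\frac{\lambda}{B}}\frac{E}{B} + \frac{E}{B}}$ in total, since the reporting terms $\frac{K}{B}$ sum to $\OO{E/B}$ (every edge is reported at most once). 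Adding these and absorbing the lower-order terms gives $\OO{V\frac{\lambda^{\varepsilon}}{B}\log^2_{\frac{\lambda}{B}}\frac{E}{B} + V\log_{\frac{\lambda}{B}}\frac{E}{B} + \frac{E}{B}\log_{\frac{\lambda}{B}}\frac{E}{B}}$ for every $\lambda \in \left[2, E\right]$. Finally I would set $\lambda = \TT{E/V}$, so that $V\lambda^{\varepsilon} = V\left(E/V\right)^{\frac{\alpha}{1+\alpha}} = V^{\frac{1}{1+\alpha}}E^{\frac{\alpha}{1+\alpha}}$ and every logarithm base becomes $\frac{E}{VB}$, which yields precisely the claimed bound.

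The step I expect to need the most care is the operation count in the DFS case, since there the priority queue is conceptually a forest of per-stack-vertex queues. I would argue, as in \cite{BGVW00}, that these can be merged into a single priority queue by tagging each out-edge with the stack depth of its source, so that \extmin\ always returns the next edge of the current top-of-stack vertex while entries whose source is no longer on top are simply discarded when extracted; this keeps the totals at $O(V)$ \extmin s and $O(E)$ \update s, each edge is inspected $O(1)$ times, and the BRT test guarantees that no vertex is numbered twice. Crucially, this argument is insensitive to the internals of the priority queue and the BRT, so it transfers verbatim from the cache-aware analysis to our cache-oblivious structures; the remaining bookkeeping is identical to that in the proof of Theorem~\ref{thm:dsssp}.
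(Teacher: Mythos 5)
Your proposal matches the paper's proof: the paper likewise invokes the algorithm of Buchsbaum et al., cites its operation counts ($2V$ calls to \extmin\ and \ext, $E$ calls to \ins\ on the priority queue and the BRT), plugs in Theorems~\ref{thm:pq} and~\ref{thm:brt} with parameter $\lambda$, and sets $\lambda = \OO{E/V}$ to get the stated bound. Your extra discussion of simulating the per-stack-vertex queues by a single depth-tagged priority queue is a detail the paper does not spell out (it simply cites \cite[Theorem 3.1]{BGVW00} for the counts), but it does not change the accounting or the conclusion.
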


\begin{proof}
	The algorithm of Buchsbaum et al. \cite{BGVW00} makes use of a priority queue and of a BRT on $\OO{E}$ elements. Specifically, it makes $2V$ calls to \extmin\ and $E$ calls to \ins\ on the priority queue and $2V$ calls to \ext~and $E$ calls to \ins\ on the BRT \cite[Theorem 3.1]{BGVW00}. We obtain $\OO{V \frac{\lambda^{\frac{\alpha}{1+\alpha}}}{B}\log^2_{\frac{\lambda}{B}} \frac{E}{B} + V\log_{\frac{\lambda}{B}} \frac{E}{B} +  \frac{E}{B} \log_{\frac{\lambda}{B}} \frac{E}{B}}$ total I/Os, by using Theorems \ref{thm:pq} and \ref{thm:brt} for some $\lambda \in \left[2, N \right]$. We set $\lambda = \OO{E/V}$ to obtain the stated bound. 
\end{proof}

\bibliographystyle{plain}
\bibliography{xtreap}

\end{document}